\newtheorem{xdefinition}{Definition}
\newtheorem{xobservation}{Observation}
\newtheorem{xtheorem}{Theorem}
\newtheorem{xlemma}{Lemma}
\newtheorem{xproposition}{Proposition}
\newtheorem{xcorollary}{Corollary}
\newenvironment{definition}{\begin{xdefinition}\rm}%
{\hspace*{\fill}\raisebox{-1pt}{\boldmath$\Box$}\end{xdefinition}}
{\hspace*{\fill}\raisebox{-1pt}{\boldmath$\Box$}\end{xobservation}}
\newenvironment{theorem}{\begin{xtheorem}\rm}{\end{xtheorem}}
\newenvironment{lemma}{\begin{xlemma}\rm}{\end{xlemma}}
\newenvironment{corollary}{\begin{xcorollary}\rm}{\end{xcorollary}}
\newenvironment{proof}{\begin{trivlist}\item[]{\bf Proof }}%
{\hspace*{\fill}\raisebox{-1pt}{\boldmath$\Box$}\end{trivlist}}
\newcommand{\OPT}{\ensuremath{\operatorname{\textsc{Opt}}}\xspace}
\newcommand{\ALG}{\ensuremath{\operatorname{\textsc{Alg}}}\xspace}
\newcommand{\ADV}{\ensuremath{\operatorname{\textsc{Adv}}}\xspace}
\newcommand{\ALGp}{\ensuremath{\operatorname{\textsc{Alg}}'}\xspace}
\newcommand{\ADVp}{\ensuremath{\operatorname{\textsc{Adv}}'}\xspace}
\newcommand{\mist}{\ensuremath{\operatorname{\texttt{mis3}}}\xspace}
\newcommand{\OBJ}{\ensuremath{\operatorname{\textsc{Obj}}}\xspace}
\newcommand{\VC}{\ensuremath{\operatorname{\textsc{PriorityVC}}}\xspace}
\newcommand{\SET}[1]{\left\{#1\right\}}
\newcommand{\Oh}[1]{\ensuremath{O^*(#1)}}
\newcommand{\lOh}[1]{\ensuremath{O^*\left(#1\right)}}
\newcommand{\BAD}{\ensuremath{\operatorname{\textsc{Bad}}}\xspace}
\newcommand{\first}[1]{\textit{first}(#1)}
\newcommand{\EPSINT}{\ensuremath{(0,\frac12]}}
\newcommand{\FLOOR}[1]{\left\lfloor#1\right\rfloor}
\newcommand{\SIZE}[1]{\left|#1\right|}
\newcommand{\SETOF}[2]{\SET{#1 \mid #2}}
\DeclareMathOperator*{\argmax}{arg\,max}
\DeclareMathOperator{\poly}{poly}
\algnewcommand\algorithmicswitch{\textbf{switch}}
\algnewcommand\algorithmiccase{\textbf{case}}
\begin{document}

\title{Advice Complexity of Adaptive Priority Algorithms}

\author{Joan Boyar\thanks{Department of Mathematics and Computer Science, University of Southern Denmark, Denmark. Email: joan@imada.sdu.dk}
  \and Kim S. Larsen\thanks{Department of Mathematics and Computer Science, University of Southern Denmark, Denmark. Email: kslarsen@imada.sdu.dk}
  \and Denis Pankratov\thanks{Department of Computer Science and Software Engineering, Concordia University, Quebec, Canada. Email: denis.pankratov@concordia.ca}}

\maketitle

\abstract{The priority model was introduced to capture ``greedy-like''
  algorithms. Motivated by the success of advice complexity in the
  area of online algorithms, the \emph{fixed} priority model was
  extended to include advice, and a reduction-based framework was
  developed for proving lower bounds on the amount of advice required
  to achieve certain approximation ratios in this rather powerful
  model. To capture most of the algorithms that are considered
  greedy-like, the even stronger model of \emph{adaptive} priority
  algorithms is needed.  We extend the adaptive priority model to
  include advice. We modify the reduction-based framework from the
  fixed priority case to work with the more powerful adaptive priority
  algorithms, simplifying the proof of correctness and strengthening
  all previous lower bounds by a factor of two in the process.

  We also present a purely combinatorial adaptive priority algorithm
  with advice for Minimum Vertex Cover on triangle-free graphs of
  maximum degree three. Our algorithm achieves optimality and uses at
  most $7n/22$ bits of advice.  No adaptive priority algorithm without
  advice can achieve optimality without advice, and we prove that an
  online algorithm with advice needs more than $7n/22$ bits of advice to
  reach optimality.
  
  We show connections between exact algorithms and priority algorithms
  with advice. The branching in branch-and-reduce algorithms can be
  seen as trying all possible advice strings, and all priority
  algorithms with advice that achieve optimality define corresponding
  exact algorithms, \emph{priority exact algorithms}.  Lower bounds on
  advice-based adaptive algorithms imply lower bounds on running times
  of exact algorithms designed in this way.  }


\section{Introduction}
\label{sec:intro}
Everybody who has studied algorithms has an intuitive notion of
a greedy algorithm.  In many discrete optimization problems, input can
be represented as a sequence of items coming from some infinite
universe, and the output of an algorithm can be represented as a
sequence of decisions -- one decision per item.  A decision could, for
example, be to accept or reject an item.  The quality of such a
sequence of decisions is often measured using an objective function
that must be maximized (or minimized).  Greediness refers to making
the decision that maximizes the objective function at this point.
This often means that the algorithm pretends that each input item is
the last it is going to receive.\footnote{For some problems, in
  particular many graph problems, the input items received so far may
  require a certain number of further input items to be given before a
  well-defined final input is formed; see~\cite{BorodinBLM2010} for a
  detailed discussion of these issues.}

One of the earliest formalizations of a greedy-like notion was in the
form of matroids by Whitney~\cite{Whitney35}, more recently extended
to greedoids by Korte and
Lov\'{a}sz~\cite{KorteL1981,KorteL1983,KorteL1984,KorteL19842}.  In
spite of the profound connection between greedoids and optimization
problems admitting optimal greedy algorithms, greedoids do not give a
complete characterization of what people usually characterize as
greedy algorithms, and there is no consensus in the research community
as to a formal definition of greedy algorithms.

Priority algorithms were introduced by Borodin, Nielsen, and
Rackoff~\cite{BorodinNR2003} in an attempt to formalize
``greedy-like'' or ``myopic'' algorithms, trying to encompass the
algorithm designers' notion of greedy-like that goes beyond the
matroid-based framework (earlier works such as
\cite{Graham1966,KargerSW99} have discussed the basic idea of using
priority functions for scheduling problems as an informal but fairly
well understood concept). One of the purposes of this formalization is
to prove results giving lower bounds on how well any priority
algorithm can approximate, without requiring any assumptions such as
$\textrm{P}\not=\textrm{NP}$.  The priority model has been studied in
the context of many combinatorial optimization topics, including
classical graph
problems~\cite{AngelopoulosB2004,DavisI2009,BorodinBLM2010,BesserP17},
scheduling~\cite{BorodinNR2003,Regev02,LiptonT94,Papakonstantinou06},
satisfiability~\cite{Poloczek11,PoloczekSWZ17},
auctions~\cite{BorodinL16}, and general results, present in many of
the above contributions as well as in~\cite{LeshM06}. Many classical
greedy algorithms have a simple structure consisting of two
components: a sorting, ordering, or priority component and an online,
irrevocable decisions component. The second component is where an
irrevocable decision is made, while the first component determines the
order in which the items are processed by that second component.
Priority algorithms have this structure and they come in two flavors:
fixed and adaptive. We illustrate these models with two well known
examples.

The input to the Minimum Spanning Tree problem is an edge-weighted,
undirected, connected graph, and the objective is to select a set of
edges forming a spanning tree of minimal total weight.  Viewing
Kruskal's algorithm for this problem as a \emph{fixed} priority
algorithm, we define the universe of input items as
$\mathcal{U} = \{(u, v, w) \in
\mathbb{N}\times\mathbb{N}\times\mathbb{Q} \mid u \neq v\}$, where
$(u,v)$ is an edge between vertices $u$ and $v$ with weight~$w$.  An
input instance is a finite subset $\mathcal{I} \subset
\mathcal{U}$. Kruskal's algorithm can be thought of as defining an
ordering on the entire universe $\mathcal{U}$ (by non-decreasing
weight~$w$, with arbitrary tie-breaking) \emph{prior to seeing any
  input items}. The input $\mathcal{I}$ is then given to the algorithm
one input item at a time, in the order defined on the universe.  When
we discuss correctness and quality, we often think of the input being
given by an adversary, but of course still respecting the ordering
that may not be total.  The algorithm makes an \emph{irrevocable}
decision when receiving the next item: accept the edge if it does not
form a cycle with the current partial solution (the set of accepted
items so far), and reject it otherwise.

Strengthening the model, \emph{adaptive} priority algorithms may
change the ordering of the universe after processing each input
item. An example of an adaptive priority algorithm is Prim's algorithm
for the Minimum Spanning Tree problem. The universe is as above.
Prim's algorithm also orders edges by non-decreasing weight, but it
has to maintain a single connected component. Thus, the algorithm
gives higher priority to edges incident to vertices already added to
the solution. Since the set of vertices in the solution keeps growing,
the ordering (the priority function) is updated in every step.  We
emphasize that it is an ordering of the universe, the rest of the
input is not known, and the ordering is redefined \emph{before} the
next input is given.

Note that online algorithms are usually only used when problems have
an online nature, while priority algorithms provide a framework for
certain offline algorithms.  However, as models, they seem quite
similar.  Priority algorithms can be seen as either extending the
power of online algorithms by allowing a limited ordering of input
items, or as limiting the power of an adversary by not allowing it
full control over the order of items.

We now discuss \emph{advice}, starting with the online algorithms setting,
where advice has been considered for some time.  An online algorithm
processes a sequence of input items, one at a time,
with no knowledge of future input items; an assumption that, even
for inherently online problems, is not necessarily realistic.  Often
some information about the input sequence is known in advance, e.g.,
its length, the largest weight of an item, etc.  The knowledge could
be absolute, approximate, or expected from experience. An
information-theoretic way of capturing some of this additional
knowledge is provided by the \emph{advice tape model}\footnote{Other
  advice models have been proposed, including the helper and answerer
  models of Dobrev et al.~\cite{A4}, the tree exploration model with
  advice of Fraigniaud et
  al.~\cite{DBLP:journals/iandc/FraigniaudIP08}, and the per request
  model of Emek et al.~\cite{A2}.  See~\cite{BoyarFKLM2016} for a
  comparison of these models.} of Hromkovi\v{c} et al.~\cite{A3}
(further technical development in B\"{o}ckenhauer et
al.~\cite{BKKKM17}). In this model, an all powerful oracle that knows
the algorithm and sees the entire input sequence\footnote{In contrast
  with the online and priority worlds, in the Turing machine world the
  advice depends only on the input length $n$ and not the input
  itself.} writes bits (referred to as advice bits) on an infinite
tape. The algorithm uses the advice tape in processing the online
items. The ``tape'' analogy is used in many other models, but the only
important properties are that there are always bits when the algorithm
asks for them and there is no detectable end to the collection of
bits.  The \emph{advice complexity} of an algorithm is the number of
bits read. Usually, we are interested in the worst-case number of bits
read as a function of the input length.  Results for online algorithms with
advice are bounds on
the number of advice bits necessary and/or sufficient to achieve a
given competitive ratio\footnote{The competitive ratio is the term
  used in online algorithms for what is essentially the approximation
  ratio when considering offline problems.}.
Often, a few bits of advice improves the competitive ratio dramatically
over what is achievable by an online algorithm without advice.

The lower bound results can be interpreted as hardness results for the
online problems: if many advice bits are necessary in order to reach
optimality (or significantly improve the competitive ratio), the
problem is hard.  Results can also give strong lower bounds on certain
types of semi-online algorithms and inspire algorithm design.
See~\cite{BoyarFKLM2016} for an extensive list of articles. Of most
relevance to us are results concerning graph
algorithms~\cite{DBLP:journals/ijfcs/BianchiBBKP18,Ais,DBLP:conf/sirocco/DobrevKM12,Magnus,GP19,DBLP:journals/corr/KommKKK15,Ahunt,Abipfinal}.

A superset of the current authors introduced advice into the
\emph{fixed} priority model~\cite{BBLP20}.  As for online algorithms
in the advice tape model, an oracle knows the algorithm, sees the
entire input sequence, and writes advice bits on the tape. The advice
is then read by the priority algorithm at its discretion during its
execution.  Just to emphasize, since the oracle knows the algorithm,
the bits always represent what the algorithm expects, so the oracle
and the algorithm cooperate.  In this model, one is interested in the
number of advice bits 
necessary and/or sufficient to achieve a given approximation ratio.
In addition to introducing this model, \cite{BBLP20}~also developed a
general framework for proving lower bounds in this model and applied
this framework to several classical problems, including Maximum
Independent Set, Maximum Bipartite Matching, Minimum Vertex Cover,
etc.  That paper left it as an open question whether the ideas can be
extended to the (arguably more useful) \emph{adaptive} priority model,
and if this would result in useful new paradigms.  Our current paper
addresses that question.

There are many models that represent computation as a leveled tree (or
even more generally as a DAG -- directed acyclic graph), such as
decision trees, branching programs, small depth formulas/circuits,
various proof systems (tree-like and general resolution), pBT
algorithms, etc. One can often define a notion for each of these
tree/DAG models which intuitively captures the amount of parallelism
needed to carry out the computation efficiently. Such a notion can be
viewed as being somewhat analogous to the notion of advice in our
setting. For example, in the pBT (priority backtracking) model of
Alekhnovich et al.~\cite{alekhnovich2005toward}, an algorithm is
represented by a pair of functions: one function allows reordering of
the universe of input items, and another function assigns a value to a
decision based on already seen input items. The ordering function can
be fixed, adaptive, or fully adaptive (we are not discussing this in full here).
The execution of such an algorithm on a particular instance can
be represented by an ordered leveled tree, where each node
corresponds to a partial execution and is labeled by the sequence of
input items seen so far and decisions made for those items. The
children of a node (in order from left to right) correspond to
different input items to be considered next according to the current
ordering function. The correctness condition requires that at least
one of the leaves contains an optimal choice of decisions. The width
of a pBT algorithm is the maximum width of a level of such a tree,
where the maximum is taken over all levels and all instances of a
given length. The length of the ordered (left-to-right) depth-first
search traversal of a the pBT tree corresponds to the running time of
the natural backtracking algorithm associated with the pBT
algorithm. This model captures many backtracking algorithms, but not
all of them. For example, early termination as well as choices of
which decision to make next can be based on only the already seen
portion of the input in pBT, and these choices cannot be made, for
example, based on the value of an LP-relaxation of the entire
instance (as is often done in real-life backtracking algorithms). The
logarithm of the width of a pBT algorithm can be thought of as
``advice'' length, but there are notable differences between the pBT
model and the priority algorithms with advice model. In particular, one
can try to simulate the pBT model by a priority algorithm with advice,
and vice versa, but one quickly runs into issues of whether priorities
and/or decisions are allowed to depend on advice. Establishing precise
connections between these models is an interesting open
problem. Connections between the fixed pBT algorithms and fixed
priority algorithms with advice were previously discussed
in~\cite{BBLP20}. While it is interesting to carry out a comparative
study between various tree/DAG-like models and expose informal and
formal connections between them and the notion of advice, it is not
the goal of the present paper. We discuss only one such connection at
length later in this paper, and that is the connection between
priority algorithms and branch-and-bound/branch-and-reduce algorithms.


We now briefly list our contributions.
\begin{itemize}
\item We introduce the notion of advice in the adaptive priority model
  and identify four natural models based on how the priority function
  is allowed to depend on the advice.
\item We extend the general lower bound framework of~\cite{BBLP20} to
  work in what we call the \emph{oblivious priority function} model.
  The results automatically apply to the weakest model which does not
  use advice in the priority functions at all and also to the fixed
  priority results in~\cite{BBLP20}.  We simplify the proof that the
  framework from~\cite{BBLP20} works, and we strengthen the lower
  bounds implied by the framework by a factor of~$2$.  The framework
  offers a template for lower bound results: By exhibiting 
  gadget pattern pairs fulfilling a given list of criteria, a lower bound can be
  computed with fairly limited work.
\item We study the classical Minimum Vertex Cover problem on
  triangle-free graphs of maximum degree~$3$, as a non-trivial example
  problem.  We present an adaptive priority algorithm with advice that
  achieves optimality. The algorithm works in all but the
  weakest of our models.  Known results imply that adaptive priority
  algorithms for this problem cannot achieve optimality without
  advice~\cite{BorodinBLM2010}.  We show that online algorithms must
  use more advice than our algorithm to achieve optimality.  Our
  algorithm is purely combinatorial and requires a somewhat involved
  analysis. This is the most technical of our contributions.
\item Priority algorithms with advice that achieve optimality
  naturally lead to exact algorithms by trying all possible advice
  strings of length no more than the upper bound proven. We call exact
  algorithms designed this way \emph{priority exact algorithms}. We
  discuss the implications of our lower bounds on priority algorithms
  with advice
  for proving lower bounds on the running times of
  such algorithms.
\end{itemize}

In \cite{BBLP20}, the lower bound template is based on an
advice-preserving reduction between two problems within the priority
framework: it is established that if there exists a fixed priority
algorithm with advice for problem $A$, then there also exists one
for Pair Matching (PM) with the same advice length, and it is
shown that PM requires a lot of advice. Such a reduction
must map each input for PM to an
input for $A$, so that decisions for $A$ can
be used for making decisions for PM. The difficulty is that
the inputs and the decisions for $A$ and PM must be aligned
so that inputs respect priority functions, and decisions are not based
on information not available at that point during the
execution of the algorithm.  This becomes significantly harder when
moving to adaptive priority algorithms, since the priorities for the two
problems can depend on advice and can change dramatically between
input items.  We avoid some of these difficulties
by working with an advice-preserving reduction between a problem in an
online setting and a problem in the priority setting, removing the
difficulties in aligning priority functions, and allowing us to focus
more on how priority functions are allowed to depend on advice.
Our extension to \emph{adaptive} priority algorithms enables us to define
and establish lower bounds for priority exact algorithms.

The remainder of the paper is organized as follows.
Section~\ref{sec:models} introduces the four adaptive priority models
with advice.  In Section~\ref{sec:exact}, we discuss connections to
exact algorithms.  In Section~\ref{sec:firstexample}, we show the
first lower bound, based on a construction from~\cite{BorodinBLM2010},
and show that the result is tight for a restricted problem.
This first example problem serves as an introduction to some of
our lower bound techniques. In
Section~\ref{sec:vc}, we present our adaptive priority algorithm for
the Minimum Vertex Cover problem on triangle-free graphs of degree at
most~$3$ and analyze its advice complexity. Section~\ref{sec:template}
presents the extension of the general lower bound framework
of~\cite{BBLP20} to adaptive priority with advice, along with a new
framework for algorithms that solve to optimality. Another
example problem is considered in Section~\ref{sec:tp}, presenting
different lower bounds obtained in two of the different models, along
with a matching upper bound in one of the two models.
Open problems are discussed in Section~\ref{sec:conclusion}.

\section{Models}
\label{sec:models}

A \emph{request-answer game}~\cite{BBKTW94,RaghavanS94} is specified
by the universe of input items $\mathcal{U}$, the universe of
decisions $\mathcal{D}$, the objective function
$\OBJ\colon\mathcal{U}^n \times \mathcal{D}^n \rightarrow \mathbb{R}
\cup \{\pm \infty\}$ on inputs of length $n$, and the type of a
problem, which could be either ``maximization'' or
``minimization''. An input to the request-answer game is a finite
multi-set of items from the universe, i.e., $X=\{x_1, \ldots, x_n\}$
where $x_i \in \mathcal{U}$.  We assume that the objective function is
invariant under simultaneous permutations of input items and
decisions, i.e., for all $x_1, \ldots, x_n$, all $d_1, \ldots, d_n$,
and all permutations $\pi \colon [n] \rightarrow [n]$,
\[ \OBJ(x_1, \ldots, x_n, d_1, \ldots, d_n) = \OBJ(x_{\pi(1)}, \ldots,
  x_{\pi(n)}, d_{\pi(1)}, \ldots, d_{\pi(n)}).\]
The values $\pm \infty$ in the objective can be used to specify
infeasible input. The setting of request-answer games is very general
and includes most problems of interest in the areas of online and
priority algorithms.

A function $P\colon \mathcal{U} \rightarrow \mathbb{R}$ is called a
\emph{priority function}. We introduce a short-hand notation
$\max_P X := \argmax \{P(x) \mid x \in X\}$ for the element of highest
priority in the multi-set $X$. In case there are multiple elements of
highest priority, we assume ties are broken in an adversarial fashion,
i.e., we assume the most unfavorable tie-breaking for our
algorithms. Thus, all upper bounds we prove will be valid for all input
instances, and we can make the simplifying assumption that
$\max_P X$ is an element, and not a set of elements.

A priority algorithm $\ALG$ is not given all of the input, $X$, at
once. Instead, $\ALG$ receives $X$ one item at a time. The priority
algorithm has some limited control over the order in which $X$ is
given: Each time, before the next input item is given, \ALG defines a
priority function, $P$, and the next input item given to \ALG is
$\max_P X$. Recall that the priority function is defined on the
universe, $\mathcal{U}$, and not directly on the remaining part of the
input $X$, which is not known to the algorithm.  We use both the
terminology that an input item has been \emph{given} to the algorithm
and that the algorithm has \emph{received} or \emph{gets} an input item.

What we have described above is the most general version of priority
algorithms, called \emph{adaptive}, since the priority function can be
adapted based on the input given so far. As the name
indicates, \emph{fixed} priority algorithms are those where the
priority function cannot be updated during the execution of the
algorithm.  This simpler class was treated in~\cite{BBLP20}.

We consider priority algorithms in the advice tape
model~\cite{A3,BKKKM17}, and start with a discussion of this model.
The setup is exactly as described for online algorithms in the
introduction.  In the advice tape model, there are two cooperating
players -- the algorithm and the oracle. The oracle sees the entire
input $X$ and writes advice to the algorithm on the infinite advice
tape using the binary alphabet.  The algorithm can decide to read zero
or more bits (for emphasis, often referred to as advice bits) from the
advice tape, sequentially from left to right, before making each
decision. We use $s_i$ to refer to the prefix of the advice tape that
has been read so far by the algorithm. The maximum number of advice
bits read, that is, the largest value of $\lvert s_n \rvert$ for any input of
size~$n$, is the \emph{advice complexity} of the algorithm (a function
of~$n$).  See Algorithm~\ref{alg:priority_template} for a template
illustrating the setup for a priority algorithm with advice.

\begin{algorithm}
\caption{Template: Priority Algorithm with Advice}\label{alg:priority_template}
\begin{algorithmic}[1]
\State $X$ is the input
  \State read zero or more bits from the advice tape
\State $s_0 \gets$ the prefix of the advice string just read
\State $i \gets 1$
\While {$X \not= \emptyset$}
  \State $P_i \gets$ the priority function for iteration~$i$
  \State $x_i \gets \max_{P_i} X$
  \State read zero or more bits from the advice tape
  \State $s_i \gets$ the known content of the advice string
  \State $d_i \gets D_i(x_1, x_2, \ldots, x_i, d_1, d_2, \ldots, d_{i-1}, s_i)$
         -- the decision for input $x_i$
  \State $X \gets X \setminus \{x_i\}$
  \State $i \gets i+1$
\EndWhile
\end{algorithmic}
\end{algorithm}

A priority algorithm with advice must have this format.  A concrete
algorithm is defined by specifying three elements for each iteration:
the priority functions $P_i$, how many advice bits to read, and how
the decision $d_i$ is made.

The decisions, $d_i$, and how many bits of advice to read,
$\lvert s_{i+1} \rvert-\lvert s_i\rvert$, are always functions of the information seen so
far, i.e., the input seen so far, the advice seen so far, and the
previous decisions.  Of course, one may omit the dependence of $d_i$
on $d_1, \ldots, d_{i-1}$, since these decisions can be reconstructed
from $x_1, \ldots, x_{i-1}$ and $s_{i-1}$.  As mentioned in the
introduction, priority algorithms with advice can give rise to
practical algorithms.  However, as a starting point, advice is created
by an oracle, and the setup is used to measure some aspect of problem
difficulty.  Thus, it makes sense to consider how advice may be used
by the algorithm.  In particular, to what extent do we allow the
priority functions to be defined based on the advice obtained by the
algorithm at a given time? We make the following distinctions:

\noindent\textbf{Unrestricted priority function model.} We allow the
priority functions to depend on the input received so far and the
advice read so far:
\[P_i (x_1, \ldots, x_{i-1}, s_{i-1}).\]

\noindent\textbf{Oblivious priority function model.} We allow the priority
function to depend on the input received so far and the advice read so
far, as in the unrestricted priority function model, but the priority
function must give the same priority to all input items which are
indistinguishable, when ignoring names not present in the input items
already seen. (For example, for unweighted graph problems, vertices of
the same degree, where neither the vertices nor their neighbors have
been seen yet, should have the same priority.)

\noindent\textbf{Decision-based priority function model.} We allow the
priority functions to depend on the input received so far and the
decisions made so far:
\[P_i (x_1, \ldots, x_{i-1}, d_1, \ldots, d_{i-1}).\]

\noindent\textbf{Advice-free priority function model.} We only allow the
priority functions to depend on the input received so far:
\[P_i (x_1, \ldots, x_{i-1}).\] Similarly to this, in~\cite{BBLP20}
the priority functions were assumed to not depend on the advice (but
the priority function was fixed, not adaptive).

Clearly, any algorithm that works in the oblivious priority function
model also works in the unrestricted priority function model.  Any
algorithm that works in the decision-based priority function model
also works in the unrestricted priority function model, since the
input and advice determine the decisions. Similarly, any algorithm
that works in the advice-free priority function model can be simulated
by an algorithm in any of the other models, for which reason we refer
to this model as the \emph{weakest}.  Observe that the unrestricted
and decision-based priority functions models coincide when advice
encodes the decisions to be made.  This sometimes functions as a point
of reference, since no more advice than encoding all the decisions can
be necessary. The oblivious priority function model appears to be
incomparable to the decision-based priority function model and its
motivation is as follows.  Although it seems natural to let decisions
depend on the advice in any way and it makes sense to let the priority
function depend on advice, it does not seem natural for an algorithm
to use, for example, a priority function that prefers input items with
certain names that have not been seen yet.

When including advice, one can ask how computationally expensive it is
to generate that advice. This could vary significantly from one
algorithm/application to the next, but the model allows anything; the
priority model does not impose any computational restrictions on
priority functions or decisions by the algorithm.  This is in line
with the information-theoretic nature of the priority model and
similar to other areas, such as online algorithms, communication
complexity, decision tree complexity, etc. These models sidestep hard
computational questions, such as P vs.\ NP, by introducing
informational bottlenecks.  The strengths of this
information-theoretic modeling are that it makes the proven lower
bounds stronger and that it makes it possible to prove results that do
not depend on unproven assumptions in complexity theory. The main
weakness of this information-theoretic modeling is that the algorithms
that are designed might be impractical.  However, priority algorithms
achieving good approximation ratios tend to have easily computable
priority functions and easily computable decisions.

\section{Priority Exact Algorithms}
\label{sec:exact}
There is a simple, general technique one can use to convert a priority
algorithm with advice to an offline algorithm with the same
approximation ratio. If the algorithm uses at most $\ell$ bits of
advice for some input length, then, on an input of that length, one
can enumerate all $2^\ell$ advice strings and execute the algorithm on
each of them, keeping track of the best result.  We call such
algorithms \emph{priority exact algorithms}, since algorithms which
solve problems to optimality are generally referred to as exact
algorithms. 

\subsection{Example: Maximum Independent Set}
\label{misthree}
In the textbook \emph{Exact Exponential Algorithms} by Fomin and
Krasch~\cite{FK10}, in presenting the measure and conquer technique,
they begin with a simple branching algorithm, {\mist}
(Algorithm~\ref{codemis3}), for Maximum Independent Set, the problem
of finding the maximum size among subsets of the vertices where no two
of the vertices are adjacent.
We show how \mist could be changed to a priority exact algorithm for graphs
of bounded degree at most~$\Delta$.

\begin{algorithm}
\caption{Maximum Independent Set algorithm \mist from~\cite{FK10}. $N[v]$ denotes $\{ v\}\cup\{ \mbox{\rm neighbors of }v\}$, $d(v)$ the current degree of~$v$, $\Delta(G)$ the maximum degree in~$G$, and $\alpha(G)$ the size of the maximum independent set.}
\label{codemis3}
\begin{algorithmic}[1]
\State {\bf Algorithm} \mist($G$)
\State {\bf Input:} A graph $G=(V,E)$.
\State {\bf Output:} A maximum cardinality of an independent set of $G$.
\If {$\exists v\in V$ with $d(v)=0$}
\State \Return {$1+\mist(G\setminus \SET{v})$}
\EndIf
\If {$\exists v\in V$ with $d(v)=1$}
\State \Return {$1+\mist(G\setminus N[v])$}
\EndIf
\If {$\Delta(G)\geq 3$}
\State {choose a vertex $v$ of maximum degree in $G$}
\State \Return {$\max(1+\mist(G\setminus N[v]), \mist(G\setminus \SET{v}))$}
\label{highestdegree}
\EndIf
\If {$\Delta(G)\leq 2$}
\State {compute $\alpha(G)$ using a polynomial time algorithm}
\State \Return {$\alpha(G)$}
\EndIf
\end{algorithmic}
\end{algorithm}

The algorithm is clearly correct since a vertex of degree~$0$ has no
neighbors in the current MIS being created and can be added to it. The same
applies to a vertex of degree~$1$, since there is no advantage to adding its
neighbor instead; its neighbor is discarded. If the degree is at least three,
one considers both possibilities, adding the vertex to the MIS and discarding
it. If all remaining vertices are of degree~2, the graph consists of
disjoint cycles, and it is easy to find maximal independent sets in cycles.

As a first intuitive explanation,
note that the algorithm gradually decreases the size of the graph until
the size of a maximal independent set is found, except that in
Line~\ref{highestdegree}, two options are explored recursively.
Using advice, one could simply make the correct choice of these two options.
A priority exact algorithm could be designed by trying all different
sequences of such choices.

In greater detail,
an input item is a vertex, together with a lists of
all its neighbors.
The history is known, so in designing priority functions, we can also talk
about the current degree, i.e., the number of neighbors that have not yet
been removed, as it is done in \mist.

In the priority exact algorithm we design the priority functions,
$P_i$, depending partially on the current degrees of the vertices.
Since neighbors of accepted vertices must be rejected,
these neighbors are given highest priority ($\Delta+3$, say).  Then,
vertices of current degree~$0$ have the next highest priority, $\Delta+2$,
vertices of current degree~$1$ have priority $\Delta+1$, and all other
vertices have priority equal to their current degree.

When there are only disjoint cycles remaining, we
define priority functions as follows: The lowest priority vertices are
those of degree~$2$, so they are not processed until it is time to
start a new cycle.  Every time we start the processing of a new cycle
(a degree~$2$ vertex), we accept the vertex (include it in the
maximum independent set).  The highest priority is given to vertices
adjacent to a vertex just processed. If it has current degree~$0$, it
is rejected, because it is adjacent to the first vertex in the
cycle. If it has current degree~$1$, it is accepted if its neighbor
was rejected and vice versa. Note that the priority does not alone
determine the decision made.

Advice comes into play in the case where the branching occurs, in
Line~\ref{highestdegree}. One bit of advice is used to tell which
branch gives the larger result, and the adaptive priority algorithm
with advice takes that branch, i.e., the advice is used to determine
if the vertex under consideration should be included into the maximum
independent set or not.  Note that the algorithm can easily determine
when to read a bit of advice, so the maximum amount of advice needed
is the number of branches on the shortest (meaning with fewest
branches) of the root to leaf paths that leads to a maximum
independent set. If one has a bound~$m$ on that number of branches in
the best case, it is never necessary to go through more than all $2^m$
possible bit strings of length $m$, and the natural approach is to do
the recursive branching with a bit in the advice string indicating
which branch to take.  In doing so, if one encounters an $(m+1)$st
branching, one can simply terminate computation in that direction and
move to the next bit string.  Thus, \mist can be seen as a priority
exact algorithm. Since the priority functions depend only on which
branches have been taken previously on the current root to leaf path,
it only depends on decisions made so far, so the defined priority
algorithm with advice is in the decision-based priority function
model.

The calculation of~$m$ is exterior to the algorithm and could, for
example, be an upper bound given as a function of~$\lvert V \rvert $.  By
recording accepted vertices, keeping the result with the
best~$\alpha(G)$, it is simple to return a maximum independent set
instead of just the size of it.

By the standard correspondence between Maximum Independent Set and
Minimum Vertex Cover, \mist can immediately be converted to an
algorithm for finding a minimum vertex cover\footnote{Minimum Vertex
  Cover is the problem of finding a minimum size subset of the
  vertices where every edge in the graph is incident to at least one
  of the vertices.} by reversing the decisions made.  In
Section~\ref{sec:vc}, \mist is extended to a priority algorithm with
advice, \VC, for finding minimum vertex covers in triangle-free graphs
of maximum degree~$3$, adding more priorities, particularly for
vertices of degree~$3$, considering which neighbors are shared with
previously processed vertices. \VC is shown to require at most
$7 \lvert V \rvert /22$ bits of advice, which is provably less advice
than required by any online algorithm with advice. No adaptive
priority algorithm without advice can achieve an approximation ratio
for this problem better than~$4/3$~\cite{BorodinBLM2010}. Thus, \VC is
evidence that the class of adaptive priority algorithms with advice is
a larger class than either of these related classes of algorithms.

Running the algorithm \VC on all possible advice strings of
length~$7n/22$, we obtain an offline algorithm solving the problem to
optimality, a priority exact algorithm, that runs in time\footnote{The
  notation $\Oh{}$ is similar to big-Oh, except that it allows
  ignoring polynomial factors, i.e., $\Oh{g(n)}$ has the same meaning
  as $O(g(n) \poly(n))$.}
$\lOh{2^{\frac{7 n}{22}}}\subset \Oh{1.247^n}$. This is much better
than the naive $\Oh{2^n}$ brute-force approach; however, there are
other more involved optimal offline algorithms achieving even better
runtimes for the Minimum Vertex Cover problem.  The best published
exact algorithm for Minimum Vertex Cover restricted to graphs of
maximum degree~$3$ runs in
$\Oh{1.0836^n}$~\cite{DBLP:journals/tcs/XiaoN13}. That algorithm is
not a priority exact algorithm; in Section~\ref{sec:firstexample} and
Subsection~\ref{sec:lbopt}, we show that no priority exact algorithm
(derived from a priority algorithm with advice in the decision-based
or oblivious priority function models) for Minimum Vertex Cover on
triangle-free graphs of maximum degree~3 has a running time less than
$\Omega(1.142^n)$.  We comment further on this in
Subsection~\ref{sec:lbopt}.



%

\subsection{Priority Exact Algorithms, in General}

When attacking new NP-hard problems, the priority exact
algorithms approach has the potential to deliver a first upper bound
that beats the brute force approach, giving an aim for later, more
specialized, possible improvements.

A significant motivation for originally introducing and studying
priority algorithms was to develop a framework for proving lower
bounds for a large collection of algorithms at the same time:
Establishing that no fixed (or adaptive) priority algorithm can attain
a certain approximation ratio implies that one has to look beyond this
fairly broad design pattern to possibly discover an algorithm with a
better approximation ratio. We note that this motivation is just as
relevant for the design of exact or approximation algorithms using the
framework outlined above. A discussion of the lower bound results we
obtain is included in Subsection~\ref{sec:lbopt}.

Priority exact algorithms form a subset of the more general
branch-and-reduce~\cite{DP60,DLL62} exact algorithms, which find an optimal
solution to a problem using a search tree and backtracking. Trying
successive possibilities for the
advice, setting some decision to accept or reject for example, is
essentially the same as a branch operation in the more general
algorithms. 
The restriction that
input items be prioritized independently of each other means that
there are many possibilities allowed in the general branch-and-reduce
algorithms that
are not allowed in priority exact algorithms. For the Minimum Vertex
Cover problem, for example, priority exact algorithms cannot handle
maximal connected
components of size at most~$20$ separately (or even handle a vertex of
degree~$2$ differently depending on whether or not it is contained in
a triangle); in fact, the lower bounds are proven by considering small
connected components.

While there are restrictions, the advantage of priority exact algorithms is
that they should be relatively easy to implement and efficient (other
than the branching, of course).
A straight-forward implementation of a priority exact algorithm as
a branching algorithm may lead to many fewer branches than one would
obtain by enumerating all bit strings of the maximum length, even in the
worst case. In many cases the problem size would reduce by
different amounts, depending on whether the decision was accept or
reject, for example.  One could also apply standard techniques for
establishing upper bound results, such as
measure and conquer~\cite{DBLP:journals/jacm/FominGK09} to obtain better upper
bounds.

In general, branch-and-reduce algorithms can be considered to have been
converted from (usually not priority) algorithms with advice. Advice
can be given for each node in the search tree indicating which branch
to take to find an optimal solution.  If the work done at a node
can be handled by a priority algorithm (and all root to leaf paths
have the same length), then it is essentially a priority exact
algorithm. However, for example for Minimum Vertex Cover, most exact
algorithms use operations that do not fit in the priority algorithm
model.

A lower bound, related to those presented here for priority exact
algorithms, is presented in~\cite{C77}, where the lower bound also holds
for priority exact algorithms (recursive proofs) for Maximum
Independent Set (if one ignores cutting off the length of the root to
leaf paths considered due to the maximum length of the advice string
necessary), but also for more powerful algorithms, and proves that
there exists a $c>1$ such that the running time is at
least~$\Omega(c^n)$. In fact, this result holds for every graph in a
large class.

Exponential lower bounds for other classes of (what can be seen as)
branch-and-reduce algorithms exist for other problems as well, for example
$k$-SAT~\cite{PI00,ABM04,AHI05}, Maximum Independent Set~\cite{C77},
Graph Coloring~\cite{M79}, and Knapsack~\cite{C80}.

\section{Example: Minimum Vertex Cover}
\label{sec:firstexample}
We now present an example, mainly illustrating some of our techniques for
proving lower
bounds for priority algorithms with advice, but also presenting an
algorithm showing that the result is tight for the class of inputs
given by the adversary. Both the algorithm and lower bound apply to the
decision-based priority model. 

Given a simple undirected graph $G = (V,E)$, a subset of vertices $S
\subseteq V$ is called a \emph{vertex cover} if every edge is incident
to at least one vertex from $S$. \emph{Minimum Vertex Cover} is the
problem of finding a vertex cover of minimum size.
An input item is a vertex together with a complete list of its
neighbors (including those vertices that have not even appeared as
part of the input yet); this is known as the \emph{vertex arrival,
  vertex adjacency model}.  Thus, for each vertex, when it becomes the
highest priority vertex, the priority algorithm must decide whether or
not to ``accept'' or ``reject'' that vertex, under the condition that
at the end, for every edge in the graph, at least one of its endpoints
must have been accepted.

\begin{theorem}
  \label{MVClower}
No adaptive priority algorithm can solve Minimum Vertex Cover optimally with fewer than $|V|/7$ bits of advice in the decision-based priority function model.
\end{theorem}
\begin{proof}
  Within the proof, we have found it beneficial to include intuition and introduce terminology relevant for the general templates, making the style somewhat different from a normal formal proof.
 
We build on the construction in~\cite{BorodinBLM2010}
(which was reused in~\cite{BBLP20}), showing that for this problem, no
adaptive priority algorithm without advice can achieve an
approximation ratio better than~$4/3$. The two graphs in
Fig.~\ref{Graph3} are used.

\begin{figure}[htp]
\centering
\begin{tikzpicture}[scale=0.5]

\node[draw=black,circle,fill=white] (1) at (0,-8) {4};
\node[draw=black,circle,fill=lightgray] (2) at (-3,-6) {3};
\node[draw=black,circle,fill=white] (3) at (0,-6) {7};
\node[draw=black,circle,fill=lightgray] (4) at (3,-6) {5};
\node[draw=black,circle,fill=white] (5) at (-3,-4) {2};
\node[draw=black,circle,fill=white] (6) at (3,-4) {6};
\node[draw=black,circle,fill=lightgray] (7) at (0,-2) {1};

\draw (1) -- (2);
\draw (1) -- (4);
\draw (2) -- (3);
\draw (3) -- (4);
\draw (2) -- (5);
\draw (5) -- (7);
\draw (4) -- (6);
\draw (6) -- (7);
\end{tikzpicture}
\hspace{1cm}
\begin{tikzpicture}[scale=0.5]

\node[draw=black,circle,fill=lightgray] (1) at (0,-2) {4};
\node[draw=black,circle,fill=lightgray] (2) at (0,-5) {3};
\node[draw=black,circle,fill=white] (3) at (3,-2) {7};
\node[draw=black,circle,fill=white] (4) at (3,-8) {5};
\node[draw=black,circle,fill=lightgray] (5) at (0,-8) {2};
\node[draw=black,circle,fill=white] (6) at (3,-5) {6};
\node[draw=black,circle,fill=white] (7) at (-3,-5) {1};

\draw (1) -- (7);
\draw (5) -- (7);
\draw (2) -- (7);
\draw (1) -- (3);
\draw (2) -- (3);
\draw (2) -- (6);
\draw (5) -- (6);
\draw (5) -- (4);
\draw (4) to[out=45,in=45,distance=55mm] (1);
\end{tikzpicture}
\caption{Topological structures of graphs giving a lower bound for the
  Minimum Vertex Cover problem. Graph~1 is on the left and Graph~2 is on the
  right. The unique minimum vertex covers are marked in gray.}
\label{Graph3}
\end{figure}

In proving lower bounds for adaptive priority algorithms, the
adversary chooses the input, first choosing the universe of input
items, and then creating an actual input $X$ from that
universe. Originally the adversary can set $X$ to the entire
universe. Then it (perhaps gradually) removes input items from $X$ as
the algorithm selects input items using priority functions and makes
irrevocable decisions for them. Thus, the input item selected by the
current priority function is always one of the remaining input items in
$X$ with highest priority. When there are ties, the adversary can
choose among those with highest priority.  (In this proof, the
adversary can simply choose an arbitrary item of highest priority, so
we may assume that there is always a single input item with highest
priority.)

For Minimum Vertex Cover, the adversary, \ADV, will select an
isomorphic copy of either Graph~1 or Graph~2 from Fig.~\ref{Graph3},
depending on the algorithm, \ALG.  Since both graphs have seven
vertices, the universe, $\mathcal{U}$, of input items, contains the
names of seven vertices (the same names are used for both graphs), and
for each of the vertices, all possibilities for input items (names of
vertices and lists of neighbors) for degrees two and three.  Note that
both graphs have unique minimum vertex covers of size~$3$. The numbers
shown in the figure are for our reference only and do not represent
actual input items given to an algorithm. The figure represents the
\emph{topological} structure of the inputs. The actual input items
would be created out of all consistent namings of vertices in such
graphs. To illustrate this point, consider vertex~$1$ in Graph~$1$. It
is adjacent to vertices~$2$ and~$6$. The corresponding possible input
item could happen to be $(1, \{2, 6\})$, but it could also be $(5,
\{2, 3\})$, for example. In the latter case, the actual input
vertex~$5$ would be mapped to vertex~$1$ in the figure, vertex~$2$
would be mapped to vertex~$2$, and vertex~$3$ would be mapped to
vertex~$6$. In total, there are $7\times 6 \times 5$ possible input
items that could be associated with vertex~$1$ in Graph~$1$. After a
particular item has been processed, the number of items that could be
associated with subsequent vertices is reduced because of consistency
requirements.

Given this universe of input items, the first priority function for
any algorithm, \ALG, for Minimum Vertex Cover must select either a
vertex of degree~$2$ or a vertex of degree~$3$ as the first vertex to
be processed.

In order to obtain a vertex cover of size~$3$, it is necessary to
accept vertex~$1$ in Graph~1 and reject vertex~$2$ in Graph~1. Thus,
for the case where the first vertex selected by \ALG
has degree~$2$, \ADV can force \ALG to produce a vertex cover of size
at least four by choosing vertex~$1$ from Graph~1 if \ALG rejects
and choosing vertex~$2$ from Graph~1 if \ALG accepts. Because of
how the universe is defined, \ADV can do this regardless of which
input item with degree~$2$ \ALG chooses.

Similarly, in order to obtain a vertex cover of size~$3$, it is
necessary to accept vertex~$3$ in Graph~1 and reject vertex~$1$ in
Graph~2. Thus, for the case where the first vertex selected by \ALG
has degree~$3$, \ADV can force \ALG to produce a vertex cover of size
at least four by choosing vertex~$3$ from Graph~1 if \ALG rejects and
choosing vertex~$1$ from Graph~2 if \ALG accepts. Again, because of
how the universe is defined, \ADV can do this regardless of which
input item with degree~$3$ \ALG chooses.

To define a problem where $k=|V|/7$ bits of advice are necessary
for optimality in the decision-based priority model, we
consider an algorithm, \ALGp, and an adversary, \ADVp.  We create $k$
disjoint subuniverses,
$\mathcal{U}_1,\mathcal{U}_2,\ldots,\mathcal{U}_k$, copies of the
subuniverse~$\mathcal{U}$, with different names for the vertices in
each copy, and define the universe, $\mathcal{U}'$, for \ALGp to be
the union of these $k$ subuniverses.  The input for \ALGp is the union
of $H_1,H_2,\ldots,H_k$, where $H_i$ is an isomorphic copy of either
Graph~1 or Graph~2.

With its priority functions, \ALGp can choose input items in many
different ways, and could, for instance, interleave input items
stemming from different copies of~$\mathcal{U}$.  However, for each $\mathcal{U}_i$, there
is always a first vertex in $\mathcal{U}_i$ that \ALGp chooses (from the current
subset $X$ of the universe, $\mathcal{U}'$).  When \ADVp is not restricted by
advice that \ALGp has read, it can force \ALGp to accept a vertex
cover of size four for $H_i$, exactly as \ADV forces \ALG, depending
on whether this first vertex from $\mathcal{U}_i$ has degree~2 or~3.

We now define $2^k$ sequences of input items for \ALGp, by describing
how one of these $2^k$ sequences of input items is defined: \ALGp
selects input items one at a time, and \ADVp knows from which of the
$k$ subuniverses the input items originate.

In this concrete case of an adaptive priority algorithm (with advice),
since we are assuming that \ALGp solves the problem to optimality, the
adversary can assume in the decision-based priority model that the
current priority function is determined based on \ALGp making the
correct accept/reject decisions up to this point. Now, \ADVp does the
following: Assume that \ALGp has already received input items
originating from $i$ of the subuniverses from which $\mathcal{U}'$ was
defined and the adversary has a current subset $X\subseteq U'$.  If
that is the case, then $X$ contains exactly enough input items to
complete one graph from each of the subuniverses from which \ALGp has
received some input item (how this is maintained is explained below).
From subuniverses not included in these $i$ subuniverses, $X$ still
contains all possible names for vertices in the graphs.

Now, \ALGp receives its next input item which will be the input item in $X$
of the highest priority in this round, and that input item is the next
in the input sequence we are defining. This item is determined by the
current priority function which only depends on the input items
received so far and its decisions so far.

If that next input item, $v$, is from one of the $i$ subuniverses,
nothing further is done. However, if that next input item originates
from a subuniverse not among the $i$, then the following is done.

If $v$ has degree~2, \ADVp can choose that it is vertex~$1$ in Graph~1
or vertex~$2$ in Graph~1. If $v$ has degree~3, \ADVp can choose that
it is vertex~$3$ in Graph~1 or vertex~$1$ in Graph~2.
It makes a choice and then removes from $X$ all input items originating from
the selected subuniverse of $\mathcal{U}'$, except enough to make up
exactly the graph that was chosen (Graph~1 or Graph~2) with the
vertex names consistent with the first input item from that graph.

Continuing this inductively defines one of the $2^k$ distinct input
sequences.

If a priority algorithm with advice for Minimum Vertex Cover uses
fewer than $k$ bits of advice for instances with $7k$ input items, the
same advice must be given for at least two of the sequences, $I_1$
and~$I_2$, defined above. \ALGp therefore uses the same priorities and
makes the same decisions on $I_1$ and $I_2$ until some difference is
detected.  Thus, consider the first time in the processing of $I_1$
and~$I_2$, where an input item that has current highest priority is
the first input item of a graph from some~$\mathcal{U}_j$, but the graphs
included in $I_1$ and~$I_2$ from~$\mathcal{U}_j$ are different.

Up until (and including) this point, all input items have been the
same for the two sets. Thus, \ALGp must make the same decision for~$v$
in both $I_1$ and~$I_2$, but one of those decisions leads to a vertex
cover of size four. Thus, \ALGp is not optimal, and $k$~bits of advice
are necessary.
\end{proof}

This lower bound is generalized in Subsection~\ref{sec:lbopt}, giving
a template for proving such bounds.

For an algorithm matching the lower bound of the above theorem
on these particular types of inputs using Graphs~1~and~2, we
begin with the case~$k=1$, i.e., we receive a graph isomorphic
to either Graph~1 or~2.
Making the correct decision on the first vertex received
enables a priority algorithm to obtain a vertex cover of size~$3$ by
giving highest priority after that to neighbors of vertices which are
already chosen, accepting if the known neighbor was rejected, and
rejecting if the known neighbor was accepted. Continuing in this way
until all vertices are processed always produces the minimum vertex cover.
Thus, one bit of advice is necessary and sufficient
for optimality for these restricted inputs; the one bit indicates
whether or not the first vertex should be accepted or rejected.

Extending the algorithm just described for the case $k=1$ for
achieving optimality when one bit of advice is given per subuniverse,
one notes that $k$~bits of advice are also sufficient for these
very specific types of input. Thus, in this very restricted
problem, for every positive integer $k$, there is an input size
where $k$ bits of advice are necessary and sufficient.

Since the results in this section concern exact, rather
than approximation algorithms, all results also apply to Maximum
Independent Set for graphs of maximum degree~$3$. Both Graph~1
and Graph~2 are triangle-free graphs of maximum degree~3, so
the lower bound also holds for triangle-free graphs of maximum degree~3,
as does the $4/3$ lower bound on the approximation ratio for
adaptive priority algorithms without advice.

\section{Solving Minimum Vertex Cover to Optimality for Triangle-Free Graphs of Maximum Degree~$\mathbf{3}$}
\label{sec:vc}

We consider the Minimum Vertex Cover problem, as defined in
Section~\ref{sec:firstexample}, on triangle-free graphs of maximum
degree~$3$, in the online and in a priority setting with advice. The
vertex arrival, vertex adjacency model is used.  (Since the
results in this section concern exact, rather than approximation
algorithms, all results also apply to Maximum Independent Set for
triangle-free graphs of maximum degree~$3$.)  Let $n$ denote the
number of vertices in the input graph. As mentioned in
Section~\ref{sec:firstexample}, no adaptive priority algorithm without
advice can achieve an approximation ratio for this problem better than
$4/3$~\cite{BorodinBLM2010}, since graphs used in the construction
there were triangle-free with maximum degree~$3$. In this section, we
show that asymptotically this problem requires at least $(n-4)/3$ bits
of advice to solve optimally in the online setting, while it can be
solved optimally using at most $7n/22 < 0.3182n$ bits of advice in the
adaptive priority setting.

We begin with the negative result for the online setting.
\begin{theorem}
\label{thm:online_vc_lb}
Asymptotically, for $n\geq 7$, no online algorithm using fewer than
$(n-4)/3$ bits of advice can accept a minimum-sized vertex cover for
all triangle-free graphs of maximum degree~$3$.
\end{theorem}
\begin{proof}
The adversary will use a graph with $n= 6n'+1$ vertices, where $n'\geq
2$. The set of all vertices is denoted by $V$.

One way to describe the adversarial input is as if it is being
constructed in stages. In the first stage, the adversary creates $2n'$
disconnected paths of length $2$ each, or $2$-paths, for short (this
already gives $6n'$ nodes). In the second stage, the adversary
connects endpoints of $2$-paths, chaining
several paths together into one large cycle.
Not all initial $2$-paths will necessarily participate in the
cycle. Finally, the adversary attaches one more vertex to an
appropriately chosen vertex~$v$ in the cycle and decides how to
present this constructed graph online. An optimal decision to accept
or reject a middle vertex of each initial $2$-path depends on the
answers to these questions: Does this $2$-path participate in the
large cycle or not, and, if it participates in the cycle, is it
located at an even or odd distance from~$v$. When the fully
constructed adversarial input is presented to an online algorithm such
that middle vertices of initial $2$-paths are given first, the
algorithm does not yet know the answers to the questions above, so a
lot of advice is required to infer correct decisions for these
vertices.

More formally, let $S=\{v_1,v_2,\ldots,v_{2n'}\}$ be the first $2n'$
vertices to be given -- they form middle vertices of $2$-paths, so all
vertices in $S$ will have degree~$2$. Throughout the processing of
$S$, the neighbors will be vertices never seen before.  As described
above, some neighbors of $S$ will be connected so as to form a cycle,
which we denote by~$C$.  Then there will be a unique vertex~$w$ of
degree~$1$, connected to one designated neighbor $v\in C\setminus S$.
Finally, the set of all other vertices will be denoted $I$, i.e., $I =
V \setminus (C \cup \{w\})$. This set induces isolated $2$-paths, with
the middle vertices in $S$.  The vertex~$v$ will have
degree~$3$. There will be an even number of vertices from $S$ in $I$
and, thus, an even number in $C$. The construction is illustrated in
Fig.~\ref{fig:online_vc_lb}.

\begin{figure}[htp]
\centering
\includegraphics[width=\linewidth]{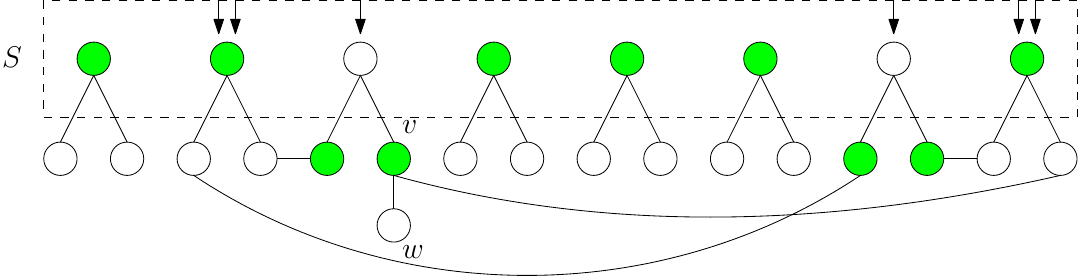}
\caption{The
    construction used in Theorem~\ref{thm:online_vc_lb}. Here, we have
    $n' = 4$. The optimal vertex cover is shown in green. Vertices with a
    single arrow pointing to them are those vertices from $S$ that were
    selected to be at odd distance from node $v$. Vertices with two
    arrows pointing to them are those nodes from $S$ that were
    selected to be at even distance from node $v$. Here,
    the number of vertices in $S$ \emph{not} in the optimal vertex
    cover is $r=2$.}
  \label{fig:online_vc_lb}
\end{figure}

Note that this graph has a unique
minimum-size vertex cover: the middle vertex of each path in $I$ and
every other vertex in $C$, starting with~$v$.

For each vertex, $u\in S$, all of which have degree~$2$, $\ALG$ must
decide whether to accept or reject this vertex, without knowing if $u$
is in $I$ or $C$. Of course, within $C$, $\ALG$ will not know if
$u$ will be at an even or odd distance from~$v$.

Suppose we want to create a graph $G$ with $0\leq r \leq n'$ vertices
from $S$ \emph{not} in the optimal vertex cover. We can choose any
subset $R$ of $r$ vertices in $S$ to be at odd distances from $v$ in
$C$. Among the other vertices, $r$ can be placed at the even locations
in $C$, and the remaining $2n'-2r$ vertices from $S$ can be in
$I$. (The placement of $v$ is also arbitrary, but we are fixing a
placement in this counting.)  For fixed $r$, there are
$\binom{2n'}{r}$ different possibilities for the subset $R$. In all,
there are $\sum_{r=0}^{n'} \binom{2n'}{r}$ different possibilities for
the subset $R$, each with a different optimal vertex cover (note that
$r=0$ is a degenerate case where there is no cycle, $v$, or $w$, but
the instance is still a possibility, and for $r\geq 1$, the unique
cycle $C$ has at least $6$ vertices and $n\geq 7$, so the graph is
triangle-free).  Any online algorithm that gets the same advice for
two of them must give a suboptimal cover for at least one of
them. Thus, an algorithm that solves the problem to optimality needs
at least
$\log_2 \sum_{r=0}^{n'} \binom{2n'}{r} >\log_2 2^{2n'-1}
=2n'-1=(n-4)/3$ bits of advice.

Just for emphasis, note that all input items in $S$ are fixed to be
exactly the same in all instances that we consider, i.e., input items
in $S$ do not depend on the choice of $R$, $v$, and $w$. Thus, an
online algorithm receiving items from $S$ can only rely on advice to
act differently on $S$ from instance to instance.
\end{proof}

Now, we present an \emph{adaptive priority algorithm with advice} that
works in both the decision-based and oblivious priority function
models, uses fewer than $(n-4)/3$ bits of advice, and achieves
optimality.

We present an adaptive priority algorithm \VC with advice for
the Minimum Vertex Cover problem on triangle-free graphs of maximum
degree~$3$. The main result of this section is the following:

\begin{theorem}
  \label{thm:optimal-algorithm}
  \VC solves Minimum Vertex Cover on triangle-free graphs with maximum
  degree~$3$ optimally in both the decision-based and oblivious
  priority function models and uses at most
  $(7/22)n = 0.31\overline{81} n$ bits of advice, where $n$ is the
  number of vertices.
\end{theorem}
\begin{proof}
Follows from Lemmas~\ref{lem:vc-corr} and~\ref{lem:vc-adv-len}.
\end{proof}

In order to describe and analyze the algorithm, we have to introduce
and define some terminology. We do this in the order from most
intuitive to least intuitive. Fortunately, most of the terminology
will be self-explanatory, but needs to be stated for the sake of
completeness.

Since it is an adaptive priority algorithm, \VC works in
discrete time steps. Each time step consists of the algorithm updating
the priority function, receiving the next input item according to the
new priority, potentially reading advice, and then making a decision
as to including the vertex corresponding to the input item in
the solution or not. We also refer to the decision of including the
vertex in the solution as \emph{accepting the vertex} and the opposite
decision as \emph{rejecting the vertex}.  The decision is called
\emph{correct} if it is possible to extend the partial solution
obtained after the decision to a minimum vertex cover in the input
graph.

In many cases it is possible to make a decision that is guaranteed to
be correct without consulting advice at all. Consider, for example, a
vertex of degree~$1$ -- it is easy to see that a correct decision is
to reject such a vertex and then accept its unique neighbor.

Suppose that at time $t$ vertex~$v$ arrives and it is not possible,
from the vertices seen so far, to make a decision that can be
guaranteed to be correct no matter what happens in the rest of the
input. In this case, \VC reads a single bit of advice. This
bit encodes a correct decision for the algorithm. In other words, if
the bit is $1$, then the algorithm accepts $v$ and otherwise the
algorithm rejects~$v$. In these cases, we say that the advice is to
accept or reject the vertex, respectively. We also say that $v$
received advice.

Once a decision has been made for a vertex, this vertex is called
\emph{processed}. Vertices that have not been processed are called
\emph{unprocessed}. Suppose that the algorithm processes the vertices
in the order $v_1, v_2, \ldots, v_n$ -- this notation is only for the
duration of this paragraph and will have a different meaning in the
proofs below. Recall that input items corresponding to the vertices
consist of pairs $(v_i, N(v_i))$, where $N(v_i)$ is the neighborhood
of the vertex~$v_i$. Since the priority algorithm is adaptive, it can
effectively remove processed vertices from the input graph. Namely, at
time $i$, the algorithm knows~$v_1, \ldots, v_{i-1}$. Therefore, in
defining the priority function, the algorithm can ignore vertices in
$\{v_1, \ldots, v_{i-1}\}$ when assigning a priority to $(v, N(v))$,
which is equivalent to removing vertices $v_1, \ldots, v_{i-1}$ from
the rest of the input graph. We refer to
$N(v)\setminus\{v_1, \ldots, v_{i-1}\}$ as the \emph{current
  neighborhood of $v$} and~$|N(v)\setminus\{v_1, \ldots, v_{i-1}\}|$
as the \emph{current degree of $v$}.
We refer to $N(v)$ and~$|N(v)|$ as the \emph{original
  neighborhood} of $v$ and the \emph{original degree} of $v$,
respectively.

The following is less intuitive but useful terminology for vertices:
\begin{description}
\item[aa-vertex:] a processed vertex that received advice to be
  accepted.
\item[ar-vertex:] a processed vertex that received advice to be
  rejected.
\item[a-vertex:] either an aa-vertex or an ar-vertex.
\item[non-a-vertex:] a vertex that was processed without advice.
\item[contributing:] an aa-vertex that has two rejected and one
  unprocessed neighbor.
\item[c-neighbor:] an unprocessed vertex that is a neighbor of a
  contributing vertex.
\item[bad-vertex:] a vertex that requires advice and all of its
  neighbors are c-neighbors of other vertices at the time this vertex
  is processed.
\item[a-sibling:] a neighbor of an aa-vertex~$v$ such that $v$ has
  another neighbor that has been
  accepted.
\end{description}
Observe that the above definitions are with respect to a given time
step. In particular, it is possible that a vertex~$v$ is processed
during some time step and at that point becomes an aa-vertex. At a
later time step, it could become a contributing vertex.
Also observe that it is possible that a neighbor of an
unprocessed vertex is a c-neighbor, that is, a neighbor of some other
vertex that is contributing at the time of consideration.

The pseudocode of \VC is given in
Algorithm~\ref{alg:rejectfirst}. Ties that are not broken by
\VC explicitly can be broken arbitrarily (even by an adaptive
adversary).

\newlength{\myindent}
\settowidth{\myindent}{P0: \mbox{}}
\newcommand{\s}{\hspace*{\myindent}}

\begin{algorithm}[!h]
\caption{\VC algorithm.}\label{alg:rejectfirst}
\begin{algorithmic}
\Procedure{\VC}{}
\While{there exist unprocessed vertices} 
    \State{Define the priority function $P$ as follows}
    \State{\s (listed in order from highest to lowest priorities):}
    \State{P1: nodes with a rejected neighbor;}
    \State{\s highest priority is given to those nodes whose neighbor was most}
    \State{\s recently rejected.}
    \State{P2: nodes with current degree~$0$.}
    \State{P3: nodes with current degree~$1$;}
    \State{\s highest priority is given to those nodes with a most recently}
    \State{\s processed neighbor; among those, highest priority is given to}
    \State{\s those nodes that had two neighbors that became aa-vertices.}
    \State{P4: nodes with current degree~$2$ that had a third neighbor in common}
    \State{\s with a previously rejected bad-vertex.}
    \State{P5: a-siblings.}
    \State{P6: nodes with current degree~$3$ with $2$ or $3$ neighbors in common}
    \State{\s with a single aa-vertex that was not a bad-vertex when it received}
    \State{\s advice.}
    \State{P7: nodes with current degree~$3$ that share neighbors with a-vertices.}
    \State{P8: other nodes with current degree~$3$.}
    \State{P9: nodes with current degree~$2$}

    \State{\mbox{}}

    \State{Receive the next vertex~$v$ according to $P$}

    \Switch{priority of $v$}
        \Case{P1 or P6:}
            \State{Accept $v$}
        \EndCase
        
        \Case{P2, P3, P4, P5, or P9:}
            \State{Reject $v$}
        \EndCase
        
        \Case{P7 or P8:}
            \State{Obtain advice to accept or reject and apply it to $v$}
        \EndCase
    \EndSwitch
   
\EndWhile
\EndProcedure
\end{algorithmic}
\end{algorithm}

In order to finish the specification of \VC, we have to
describe how the oracle generates the advice. The oracle sees the
entire input beforehand and it knows how \VC works. Since
\VC is deterministic, the oracle can, in effect, simulate
\VC on the input. Thus, the oracle knows the order in which
the vertices are processed and it knows at which time steps
\VC asks for advice. The oracle supplies the advice in the
order in which the advice is requested by \VC. Suppose that
at some time, \VC processes $v$ based on advice. If there
is a unique correct decision for $v$, the oracle provides that
decision, either accept or reject, which is one bit of information. If
either decision is correct (could be completed to a minimum vertex
cover) and $v$ is a bad-vertex, the oracle advises to accept. Finally,
if either decision is correct and $v$ is not a bad-vertex, the oracle
advises to reject. This tie-breaking condition is particularly
important for the analysis.

We mention a few high level features of \VC. Vertices that
obviously can be handled without advice are those with current degree
$0$ or~$1$, and neighbors of rejected (accepted in \mist) vertices.
The two key observations in the design of \mist are the following:
First, the vertices just
described should receive the highest priorities (as described in
Section~\ref{misthree}).
Second, if we
process vertices of current degree~$3$ prior to processing vertices of
current degree~$2$ (with a small exception of P4; ignore for the
moment), then, when a vertex of current degree~$2$ arrives according
to~P9, we know that all the remaining vertices in the graph have
current degree~$2$. We can conclude that the remaining graph is a
collection of disjoint cycles and an optimal vertex cover in such a
graph can be computed by a priority algorithm without
advice. Therefore, with such an approach, only vertices of current
degree~$3$ may require advice and the goal is to minimize the
number of such vertices. This is where cooperation between the oracle
and the algorithm becomes crucial -- we shall see that the
tie-breaking condition of the oracle is chosen so as to create
scenarios under which some vertices of current degree~$3$ may be
processed without advice.

Next, we analyze \VC formally. We begin with the easier proof
of correctness of the algorithm and then establish the sufficient
number of bits of advice. Suppose that at time $t$ a vertex arrives
according to priority P9 for the first time. Then we refer to the time
interval $[1, t-1]$ as \emph{Phase~$1$} and to the time interval
$[t, n]$ as \emph{Phase~$2$}. If such $t$ does not exist then we set
$t=n+1$ meaning that the entire time interval $[1, n]$ consists of
only Phase~$1$ and Phase~$2$ is empty. Correctness of the algorithm
follows from the following lemma.

\begin{lemma}
\label{lem:vc-corr}
Every decision of \VC is correct.
\end{lemma}
\begin{proof}
  The proof is by simple induction: if all previous decisions are
  correct, we need to demonstrate that the decision for the next
  vertex is also correct. We omit the formal setup of induction and go
  straight to the inductive step. Let $v$ be the newly arriving
  vertex.

First, suppose that the algorithm is in Phase~$1$. 

\textit{Case: $v$ has priority P1, P2, P3, P7, or P8.} The decisions
of \VC are obviously correct.

\textit{Case: $v$ has priority P4.} \VC rejects $v$, so
suppose for the sake of contradiction that $v$ should have been
accepted. Let $v'$ denote a bad-vertex and~$u_1$, $u_2$, and $u_3$ its
three neighbors such that $u_1$ is also a neighbor of~$v$. This is
illustrated in the picture below, which omits some edges so as to
avoid clutter.

\begin{center}
\includegraphics[scale=0.6]{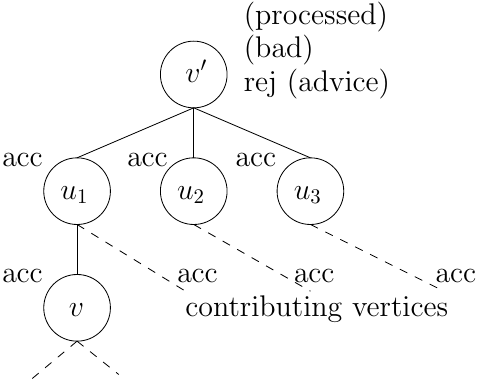}
\end{center}

Observe that another optimal vertex cover is obtained by accepting
$v', u_2, u_3, v$ and rejecting~$u_1$. Thus, the oracle would have
given advice to accept $v'$, since at the time $v'$ was processed,
both decisions were correct, and the oracle prefers accepting
bad-vertices. This is a contradiction, so the decision of \VC
to reject $v$ is correct.

\textit{Case: $v$ has priority P5.} Observe that processing
bad-vertices leads to processing of their neighbors prior to any
vertex with priority P5 being processed. Therefore, $v$ is a neighbor
of an aa-vertex~$v'$ and~$v'$ was never bad. Denote the neighbors of
$v'$ by $u_1, u_2, u_3$ such that $u_1 = v$. Consider the time when
$v'$ received advice to be accepted. We claim that at most one of
$u_1, u_2, u_3$ can be accepted in the future. Suppose, for the sake
of contradiction, that at least two nodes, say, $u_1$ and $u_2$, must
be accepted in the future. Then accepting $u_1, u_2, u_3$ and
rejecting $v'$ would result in a vertex cover of the same size or
smaller as accepting $v', u_1, u_2$ and rejecting/accepting~$u_3$. In
this case, since $v'$ was not bad at the time it received advice, the
oracle should have given advice to reject $v'$ according to the
tie-breaking condition. This is a contradiction, and therefore at most
one of $u_1, u_2, u_3$ can be ever accepted. By definition of an
a-sibling, either $u_2$ or $u_3$ has been accepted prior to $v=u_1$
being processed, so it is correct to reject~$v$.

\textit{Case: $v$ has priority P6.} As in the previous case, let $v'$
be the aa-vertex that shares at least two neighbors with $v$ and that
was not bad at the time it was processed. As already argued, at most
one neighbor of $v'$ can be accepted, therefore at least one of the
neighbors of $v$ in common with $v'$ must be rejected. Since each edge
must be covered by the solution, we conclude that $v$ must be
accepted.

Since the case of~P9 cannot happen in Phase~1, we move to the analysis
of Phase~$2$. As discussed prior to this lemma, at the beginning of
Phase~2 we know that the remaining graph is a collection of
cycles. Once a vertex of current degree~$2$ arrives according to~P9,
it is rejected, which creates two vertices of current degree~$1$
each. They are neighbors of a rejected vertex, so they are processed
next according to~P1. The degrees of their neighbors on the cycle drop
to $1$ or $0$, so they are processed according to~P1--3. This
continues until all vertices in this cycle have been processed. Then
the next cycle is processed and so on. The correctness of the
constructed vertex cover follows from the fact that a minimum vertex
cover in every cycle rejects at least one vertex. By symmetry, a
minimum vertex cover may be rotated clock-wise so any vertex may be
that rejected vertex. Thus, it is always safe to reject the first
vertex from the cycle. After that, correctness follows by the
correctness of cases~P1--3, as in Phase~$1$.
\end{proof}

Central to the analysis of the number of bits of advice is the notion
of a \emph{component}. A new component starts when a new a-vertex is
processed that does not have neighbors in common with a previously
processed a-vertex.  When a new component is started, any previous
component is closed, meaning that it receives no more vertices. A
vertex is included in the current component if it is not in any
previous component, and one of the following cases applies:
\begin{itemize}
\item it is an a-vertex that shares a neighbor with a previously
  processed a-vertex from the current component,
\item
it is a neighbor of an a-vertex from the current component,
\item
  it is accepted or rejected before the component is closed.
\end{itemize}

Note that a component in the above sense is not to be confused with a
connected component -- it is possible for a connected graph to consist
of several components, and it is possible that such a component is not
connected.

We let $c$ denote the final number of components created by
\VC on the given input. For $i \in [c]$, we let $a_i(t)$
denote the number of a-vertices in component $i$ at time $t$, and we
let $s_i(t)$ denote the size of component $i$ at time~$t$.  Let
$\hat{t}_i$ denote the time component $i$ is closed. We use a
shorthand notation $a_i := a_i(\hat{t}_i)$ and~$s_i := s_i(\hat{t}_i)$
for the final number of a-vertices in component $i$ and the final size
of component $i$, respectively. We also define
$n_i(t) := s_i(t) - a_i(t)$, which is the number of non-a-vertices in
component $i$ at time~$t$, and $n_i=s_i-a_i$, which is the number of
non-a-vertices in component $i$.

The high level idea behind bounding the number of advice bits used by
\VC is to prove two inequalities and then take their linear
combination. The first inequality (Lemma~\ref{lem:vc_lem1}) is more
local in that it is proved for each component independently of other
components. The second inequality (Lemma~\ref{lem:vc_lem2}) is more
global in that it incorporates potential interactions between
components. Both inequalities are proved via weight reallocation
arguments as explained in the corresponding lemmas.

We begin with the more difficult local lemma.
\begin{lemma}
\label{lem:vc_lem1}
For all $i\in[c]$, we have
\[ s_i \ge 3 a_i + 1.\]
\end{lemma}
\begin{proof}
Consider component~$i$. 

If $a_i=1$, then the vertex that received advice and its three
neighbors are added to the component by definition, so $s_i \ge 4 = 3 a_i + 1$.

If $a_i = 2$, then the two vertices that received advice can share at
most one neighbor. If, to the contrary, they had two vertices in
common, then if the first of the two vertices is rejected, then its
neighbors are accepted, and the second vertex becomes unary and does
not need advice due to P3; a contradiction.  Similarly, if the first
vertex is accepted, it becomes an aa-vertex and the second vertex gets
accepted without advice due to P6. So counting the two vertices and
their five distinct neighbors gives that $s_i \ge 7 = 3 a_i + 1$.

If $a_i \ge 3$, then the situation is more involved. The desired
inequality trivially follows from
\begin{equation}
    \label{eq:vc-target}
    n_i \ge 2a_i + 1.
\end{equation} 
For $j \in [a_i]$, let $t_j$ denote the time step when $j$th a-vertex
in component $i$ is processed. Call this vertex~$v_{j}$. Thus, the
component gets started at time $t_1$ with a-vertex~$v_1$.
Denote the three neighbors of $v_j$ by $u_{j, 1}, u_{j, 2}, u_{j, 3}$.

We prove Eq.~\eqref{eq:vc-target} using a weight reallocation
argument. Denote the weight of a vertex~$v$ by~$w(v)$. Each
non-a-vertex~$v$ that gets added to this component starts out with
weight $w(v) = 1$. Each a-vertex~$v_j$ that gets added to this
component starts out with weight $w(v_j) = 0$. The weight is
reallocated from non-a-vertices to a-vertices, so as to guarantee the
following properties at the end of processing the component:
\begin{description}
    \item[I1] the weight reallocated to the first a-vertex in the component is $1.5$: $w(v_1) = 1.5$;
    \item[I2] the weight reallocated to the second a-vertex in the component is $2.5$: $w(v_2) = 2.5$;
    \item[I3] the weight reallocated to the third a-vertex in the component is  $2.5$: $w(v_3) = 2.5$;
    \item[I4] the weight reallocated to every other a-vertex is $2$: $w(v_j) = 2$ for $j \in [4, \ell]$.
\end{description} 
Note that since we are in the case $a_i \ge 3$, this is
well-defined. We check that I1--4 are sufficient to establish the
claim. Observe that the total amount of weight allocated to component
$i$ is exactly~$n_i$. After reallocating the weight, I1--I4 imply that
the total weight in the component is
$\ge 1.5 + 2.5 + 2.5 + 2 (a_i-3) = 2 a_i + 0.5$. Since the
reallocation procedure does not destroy weight or create extra weight,
the total amount of weight in the component at the end is~$n_i$. This
implies that $n_i \ge 2a_i + 0.5$. Since $n_i$ and $a_i$ are integers,
we have $n_i \ge 2a_i + 1$, as desired.

We execute weight reallocation in parallel with \VC. The
reallocation follows some rules: (a) after sufficient weight is
reallocated to an a-vertex, this weight is not reallocated ever again;
(b) only the weights of vertices that are in component $i$ can be
reallocated (to an a-vertex in component~$i$); (c) at any point in
time, the weight of non-a-vertices can be either $0$, $0.5$, or $1$; (d)
if the weight of a non-a-vertex is $0$, then the vertex has been processed
and removed from the graph; (e) the weight of every non-a-vertex can be
reallocated twice: $0.5$ can be reallocated when its degree goes from
$3$ to $2$ (and not more than $0.5$ is reallocated in this scenario)
and the remaining $0.5$ is reallocated when the degree of the vertex
drops down further, when it is processed, or even after it is
processed; (f) every unprocessed vertex with weight $0.5$ is a
neighbor of a processed a-vertex. We do not keep track of each of the
above statements explicitly in the following case analysis, since this
is rather tedious. It is fairly straightforward to verify that each
claim continues to hold in the analysis below.

\textbf{Observation 1:} For point (b), we make one observation that is
used repeatedly, namely that a certain neighbor of a neighbor cannot
belong to an earlier component, which means that we are allowed to
reallocate weight from it. Note that the only unprocessed vertices of
a closed component are neighbors of aa-vertices. Consider an
a-vertex~$v_j$ in the current component that shares a neighbor
$u_{j,1}$ with a previously processed a-vertex~$v_{j'}$, also of
component~$i$, for some $j' < j$. Then after processing $v_j$, the
current degree of $u_{j,1}$ drops to~$1$. Let $z$ be the unique
neighbor of $u_{j,1}$ at that point. We claim that $z$ cannot belong
to a previous component. If $z$ did belong to a previous component,
say $i' < i$, then $z$ would necessarily be a neighbor of an a-vertex
in component~$i'$. Suppose that component~$i'$ was closed at
time~$t$. The degree of $u_{j,1}$ is~$3$ until $v_{j'}$ is being
processed. Thus, at time $t$, $u_{j,1}$ shared a neighbor, $z$, with
an a-vertex in component~$i'$. This implies that component $i'$ should
not have been closed at time $t$, since it could be extended by
considering~$u_{j,1}$. Thus, $z$ cannot belong to a previous
component, and we are free to allocate weight away from~$z$.

With this additional observation, we are ready to prove I1--4.

\textbf{I1.} Observe that if the first a-vertex is an ar-vertex, then
all its neighbors are removed prior to any other vertex receiving
advice. Since a-vertices in a component are connected through common
neighbors, there can be no other a-vertices in the component, so
$a_i = 1$. Therefore, since we assume that $a_i \ge 3$, the first
a-vertex must be an aa-vertex. This vertex along with its three
neighbors are added to the component. We reallocate $0.5$ unit of
weight from each of the neighbors to~$v_1$. This is illustrated in the
figure below. In order not to clutter the illustration, we do not show
all edges incident to vertices. How a vertex is processed starting at
$t_1$ is indicated next to the vertex. The weight of a vertex is shown
inside the vertex.

\begin{center}
\includegraphics[scale=0.6]{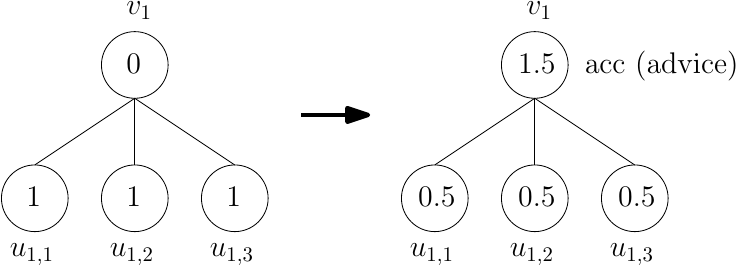}
\end{center}

\textbf{I2.} The second a-vertex~$v_2$ must have exactly one neighbor
in common with $v_1$: if it had no neighbors in common, a new
component would get started; if it had more than one neighbor in
common, then it would be processed without advice. Without loss of
generality, let that neighbor be $u_{1, 3}=u_{2,1}$. Observe that
$v_2$ must have received advice to be accepted. If it received advice
to be rejected, then all its neighbors would be accepted and $u_{1, 1}$
and $u_{1,2}$ would become a-siblings (if they have not been processed
yet), so they would get processed prior to~$v_3$. But this implies
that all neighbors of $v_1$ and $v_2$ would be eliminated prior to
$v_3$ and $v_3$ would never be added to the current component,
contradicting the assumption that $a_i \ge 3$.

Thus, we assume that $v_2$ received advice to be accepted. At time
$t_2$, the current degree of $u_{1,3}$ must be $2$: if it was higher,
then the original degree (which would include $v_1$) would be more
than $3$; if it was lower, then $u_{1,3}$ would be processed prior to
$v_2$ and $v_2$ would not have received advice. One of the vertices
contributing to the current degree of $u_{1,3}$ is~$v_2$. Let the
other vertex be~$z$. Observe that $z$ is different from all of
$u_{1,1}, u_{1,2}, u_{2,2}, u_{2,3}$ since otherwise the input graph
would contain a triangle. When $v_2$ is processed, the current degree
of $u_{1,3}$ drops to $1$, so it will be rejected and its neighbor
accepted. Since $z$ is a new vertex added to the component, we can
reallocate one unit of weight from $z$ to~$v_2$. We also reallocate
$0.5$ unit of weight from each of neighbors of $v_2$ to~$v_2$. This
results in the overall weight of $v_2$ being $2.5$, as desired. It is
easy to check that this reallocation satisfies all the rules and the
illustration is shown below.

\begin{center}
\includegraphics[scale=0.6]{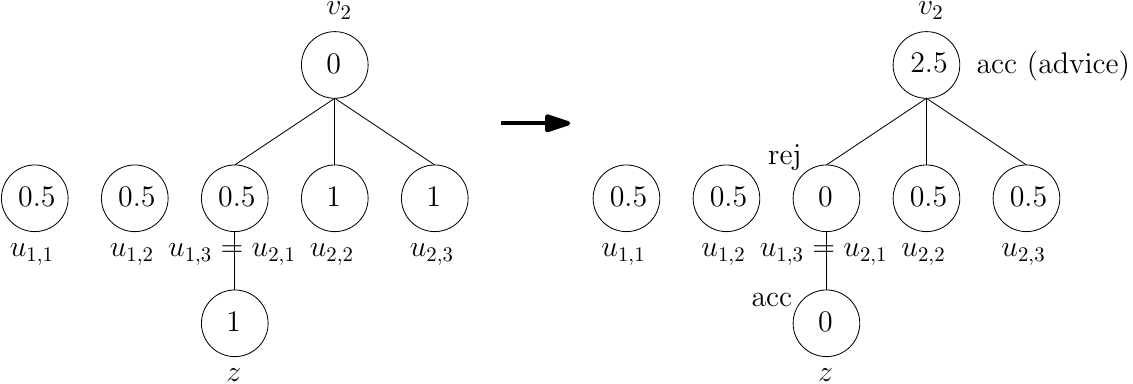}
\end{center}

\textbf{I3.} There are several cases for~$v_3$. 

\textit{Case 1.} Consider the case where $v_3$ shares a single
neighbor with a previous aa-vertex (could be either $v_1$ or
$v_2$). Without loss of generality, let the shared neighbor
be~$u_{3,1}$. Then $u_{3,2}$ and $u_{3,3}$ are added to the current
component for the first time so they start out with weight~$1$. Since
$u_{3,1}$ has not yet been processed at $t_3$, its weight is~$0.5$.

\textit{Subcase 1(a).} Suppose that $v_3$ receives advice to be
rejected. Then the weight of all its neighbors can be reallocated to
$v_3$ resulting in $w(v_3) = 2.5$, as desired. This obeys the
reallocation rules, since $v_3$ and all its neighbors will be removed
from the graph prior to~$t_4$. This is illustrated below.

\begin{center}
\includegraphics[scale=0.6]{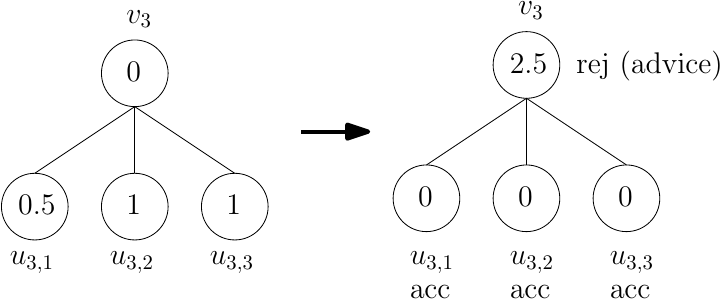}
\end{center}

\textit{Subcase 1(b).} Suppose that $v_3$ receives advice to be
accepted. Without loss of generality suppose that $u_{3,1}=u_{2,2}$,
i.e., the single shared neighbor is with $v_2$ ($v_2$ and $v_1$ behave
symmetrically in the following argument). Arguing similarly to I2,
after accepting $v_3$, the current degree of $u_{3,1}$ would drop
to~$1$. Let $z$ be the unique neighbor of $u_{3,1}$ at that
point. Then, by the priority tie breaking in P3, $u_{3,1}$ is rejected
and $z$ is accepted. If $z$ has weight $1$ at time $t_3$, then the
weight reallocation is done similarly to I2. Otherwise, $z$ has
weight~$0.5$. By Observation~1, $z$ is in the current component. The
vertex $z$ cannot be a neighbor of $v_3$, or there would be cycle. The
only vertices in the current component of weight $0.5$ after
processing vertices in I1 and I2 and reallocating weights are
neighbors of $v_1$ and~$v_2$. Since $z$ cannot be a neighbor of $v_2$
(this would create a triangle), it must be a neighbor
of~$v_1$. Without loss of generality, assume $z = u_{1,1}$. Since $z$
is accepted, $u_{1,2}$ becomes an a-sibling, unless it was already
processed. So, both $u_{1,2}$ and $u_{1,1}$ are processed and removed
from the graph prior to~$t_4$. Thus, we can reallocate $0.5$ weight
from each of $u_{1, 2}, z=u_{1,1}, u_{2, 2}=u_{3,1}, u_{3,2}, u_{3,3}$
to $v_3$ resulting in $w(v_3) = 2.5$ as desired. This last case is
illustrated below.

\begin{center}
\includegraphics[scale=0.6]{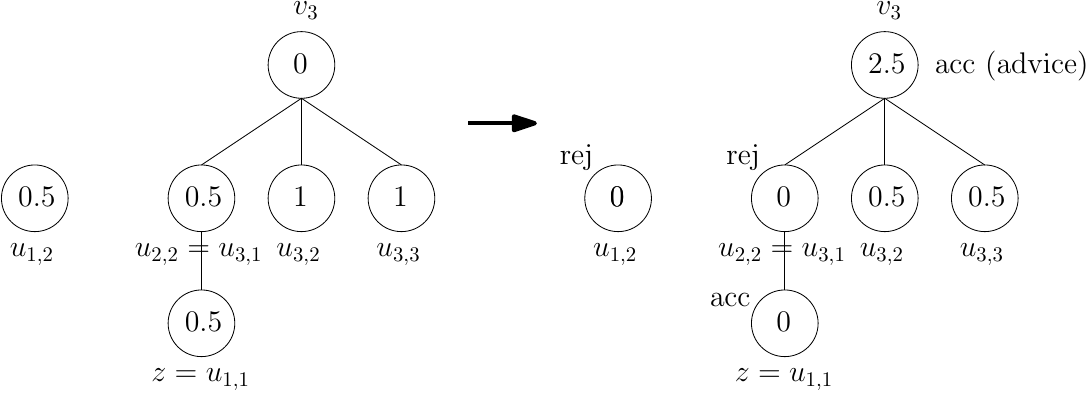}
\end{center}

\textit{Case 2.} Suppose that $v_3$ shares two neighbors with previous
aa-vertices -- one with $v_1$ and another with~$v_2$. More
specifically, without loss of generality suppose that
$u_{3,1} = u_{1,1}$ and~$u_{3,2}=u_{2,2}$.

\textit{Subcase 2(a).} If $v_3$ receives advice to be rejected, then
the three neighbors $u_{3,1}, u_{3,2}, u_{3,3}$ are accepted. Their
weights are reallocated to~$v_3$. Moreover, $u_{1,2}$ (assuming that
$u_{1,3}$ was the neighbor common to $v_1$ and~$v_2$) was either
processed earlier and had $0.5$ weight remaining, or becomes an
a-sibling and is processed prior to~$t_4$. In either case, we can
reallocate $0.5$ weight from $u_{1,2}$ to $v_3$ for the total amount
of weight reallocated to $v_3$ being~$2.5$. This is illustrated below.

\begin{center}
\includegraphics[scale=0.6]{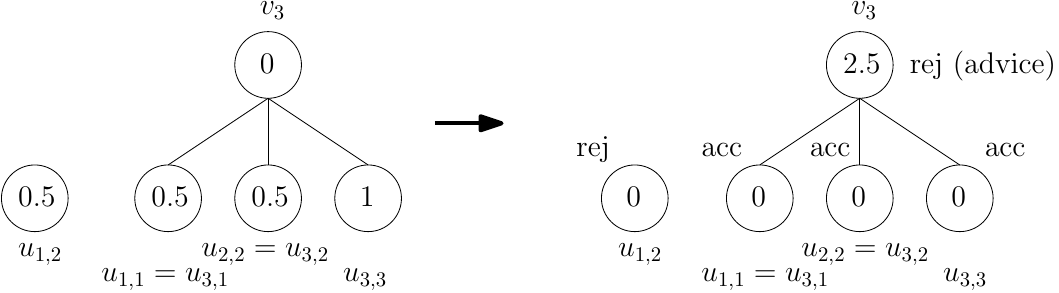}
\end{center}

\textit{Subcase 2(b).} If $v_3$ receives advice to be accepted, then
the current degrees of $u_{3,1}$ and $u_{3,2}$ drop down to $1$ each
(same argument as in I2). Let the unique neighbor of $u_{3,1}$ be
$z_1$ and the unique neighbor of $u_{3,2}$ be~$z_2$. Note that $z_1$
is not a neighbor of $v_3$ or $v_1$ in the original graph for
otherwise it would contain a triangle. Similarly, $z_2$ is not a
neighbor of $v_3$ or~$v_2$. Observe that after processing $v_3$,
vertices $u_{3,1}, u_{3,2}, z_1,$ and $z_2$ will be processed prior
to~$t_4$. If either $z_1$ or $z_2$ (which could be the same vertex)
has weight $1$ at $t_3$, then we can reallocate $0.5$ weight from each
of $u_{3,1}, u_{3,2}, u_{3,3}$ to $v_3$ and $1$ unit of weight from
$z_1$ or $z_2$ to $v_3$ for the total weight $2.5$ as desired. An
example where the weight of $z_1$ is $1$ at time $t_3$ is illustrated
below.

\begin{center}
\includegraphics[scale=0.6]{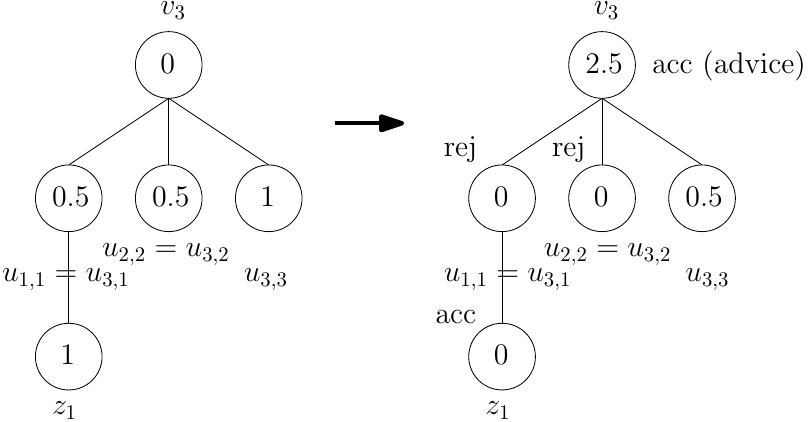}
\end{center}

The only remaining scenario is when each of $z_1$ and $z_2$ have
weight $0.5$ at time~$t_3$. Based on I1 and I2 and properties of $z_1$
and $z_2$ mentioned above, it must be the case that $z_1=u_{2,3}$
and~$z_2=u_{1,2}$, since, otherwise, there is a triangle, so
$z_1\neq z_2$. In particular, after processing $v_3$, vertices
$u_{1,1}, u_{2,2},u_{1,2},u_{2,3}$ will be processed prior
to~$t_4$. Thus, we can reallocate $0.5$ from each of them, plus $0.5$
from~$u_{3,3}$.

\textbf{I4.} Let $j \ge 4$ and consider $v_j$ receiving advice at
time~$t_j$. Each of the neighbors of $v_j$ has current degree at least
$2$ (same reason as in I2) and at least one of the neighbors is shared
with a previous aa-vertex in the component.

\textit{Case 1.} Suppose that $v_j$ receives advice to be
accepted. Without loss of generality assume that $u_{j,1}$ is a
neighbor shared with $v_{j'}$ for some $j' < j$. After processing
$v_j$ the degree of $u_{j,1}$ drops to $1$, so by the priority tie
breaking in P3, it is rejected and its neighbor, call it $z$, is
accepted. We can reallocate $0.5$ weight from each of
$u_{j,1}, u_{j,2}, u_{j,3}$ and $z$ to $v_j$ for the total weight of
$2.0$, as desired. This is illustrated below.

\begin{center}
\includegraphics[scale=0.6]{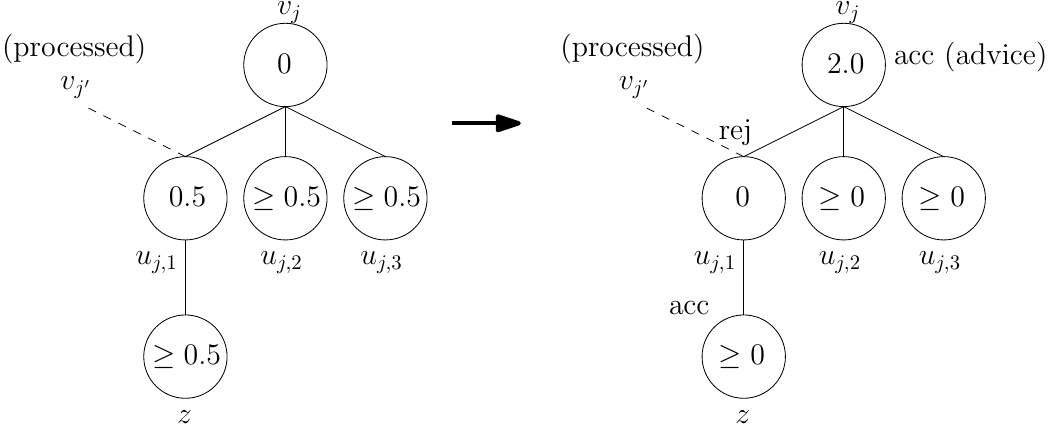}
\end{center}

\textit{Case 2.} Suppose that $v_j$ receives advice to be
rejected. Then the three neighbors are accepted. As argued before,
each of the neighbors has degree at least $2$ at time $t_j$, and each
of the neighbors has at least $0.5$ weight available for
reallocation. If at least one of the neighbors has $1$ unit of weight
available, then we can reallocate $2.0$ units of weight from the
neighbors of $v_j$ to $v_j$, as desired. If each neighbor has only
$0.5$ units available then each neighbor is also a neighbor of a
previously processed aa-vertex in this component. Let $v_{k}'$ be such
a processed neighbor of $u_{j,k}$ for $k \in [3]$. Observe that the
$v_k'$ are all distinct, since a vertex receiving advice can share at
most one neighbor with a previous aa-vertex. If some $v_k'$ is not a
contributing vertex at time $t_j$, then, by accepting $u_{j,k}$, the
other remaining neighbor of $v_k'$ becomes an a-sibling and will be
processed prior to~$t_{j+1}$. In this case, we can reallocate $0.5$
from the a-sibling and each of the $u_{j,k}$ for $k \in [3]$ to $v_j$
for a total weight of $2.0$, as desired.

The only remaining subcase is when all of the $v_k'$ are contributing
vertices at~$t_j$. This means that $v_j$ is a bad-vertex at
time~$t_j$. Consider what happens after processing~$v_j$. The degree
of each $u_{j,k}$ drops to exactly~$1$ and they are accepted. Let
$z_k$ be the unique neighbor of some $u_{j,k}$ immediately prior to
$u_{j,k}$ being accepted (note that the $z_k$ do not have to be
distinct, but it does not matter for the following argument). If,
after processing all $u_{j,k}$, the degree of at least one of the
$z_k$ drops below~$2$, then it would be processed prior
to~$t_{j+1}$. In this case, we can reallocate $0.5$ weight from each
of $u_{j,k}$ and $0.5$ weight from the to-be-processed $z_k$ to $v_j$
for a total weight of~$2.0$. Otherwise, consider $z_1$, for
example. After processing all $u_{j,k}$ the current degree of $z_1$
is~$2$. Thus, it can be rejected without advice according to
priority~P4 and its weight can be reallocated to $v_j$ for the total
weight of $v_j$ being at least $2.0$ (the other weights coming from
the $u_{j,k}$), as desired.
\end{proof}

Next, we prove the second inequality.

\begin{lemma}
\label{lem:vc_lem2}
We have
\[ 10a - 4c \le 3 n,\] where $n$ is the number of vertices in the
graph, $a$ is the number of advice bits read by \VC, and $c$
is the number of components, as defined earlier.
\end{lemma}
\begin{proof}
  We prove this via a weight reallocation argument similar to the one
  used in Lemma~\ref{lem:vc_lem1}. Weight reallocation is done in
  parallel with \VC, so we can describe it one vertex at a
  time. Weight reallocation is performed each time an input vertex
  receives advice and may involve vertices that are processed
  immediately after that without advice. There are several key
  differences from the weight reallocation done in
  Lemma~\ref{lem:vc_lem1}. First of all, every vertex starts with
  initial weight $3$ -- no matter whether the vertex is an a-vertex or
  non-a-vertex. Secondly, we allow weight to be reallocated even from
  unprocessed vertices from closed components, since we are not
  interested in a component-wise inequality, but the inequality for
  the entire input. The weight reallocation procedure will guarantee
  the following properties:
\begin{description}
\item[J1.] The first a-vertex of every component receives $6$
  units of weight.
\item[J2.] Subsequent a-vertices in every component receive $10$
  units of weight each.
\end{description}
The reallocation procedure satisfies additional constraints: (a) no
extra weight is created or consumed; (b) the weight of a vertex is at
least its current degree; (c) if a vertex has weight~$0$, then it must
have been processed; (d) at any point in time $t$ the weight that
could have been reallocated by $t$ comes only from vertices processed
by time $t$ or neighbors of a-vertices processed by time~$t$. We will
not explicitly check each of these constraints in the cases described
below, but it is easy to verify from the arguments.

We first see how J1 and J2 imply the claim and then define the
reallocation procedure to satisfy J1 and J2. Observe that after
processing the entire input, the total weight in component $i$ is at
least $6 + 10 (a_i - 1)$. Adding this over all components $i \in [c]$,
we see that the total weight in the input graph is at least
$6c + 10(a-c)$, since components are vertex disjoint. Without weight
reallocation, the total weight would be $3n$ since each vertex starts
out with exactly $3$ units of weight. Since the weight reallocation
procedure does not create extra weight, we have $3n \ge 6c + 10(a-c)$,
which implies the statement of the lemma.

Although we are allowed to reallocate weight from unprocessed vertices
from closed components, we still define the procedure for each
component separately. We use the notation of
Lemma~\ref{lem:vc_lem1}. More specifically, consider
component~$i$. Let $a_i$ denote the total number of a-vertices in the
component at the end
For $j \in [a_i]$, let $t_j$ denote the time step when the $j$th
a-vertex~$v_j$ in component~$i$ was processed. Thus, the component
gets started at time $t_1$ with a-vertex~$v_1$.
Denote the three neighbors of $v_j$ by $u_{j, 1}, u_{j, 2}, u_{j, 3}$.

\textbf{J1.} Since $v_1$ is the first vertex of the component, its
neighbors have not been processed and they cannot be neighbors of
previous a-vertices. Thus, we have
$w(v_1) = w(u_{1,1})=w(u_{1,2})=w(u_{1,3})=3$. No matter whether $v_1$
is an aa-vertex or an ar-vertex, after it is processed and removed
from the graph, the degrees of the neighbors drop by $1$ each. Thus,
we can reallocate one unit of weight from $u_{1,k}$ for $k \in [3]$
to~$v_1$, resulting in $w(v_1) = 6$, as desired. This is illustrated
below. As in Lemma~\ref{lem:vc_lem1}, we do not show all edges
incident to vertices so that the illustration does not become
cluttered. How a vertex is processed is indicated next to the
vertex. The weight of a vertex is shown inside the vertex.

\begin{center}
\includegraphics[scale=0.6]{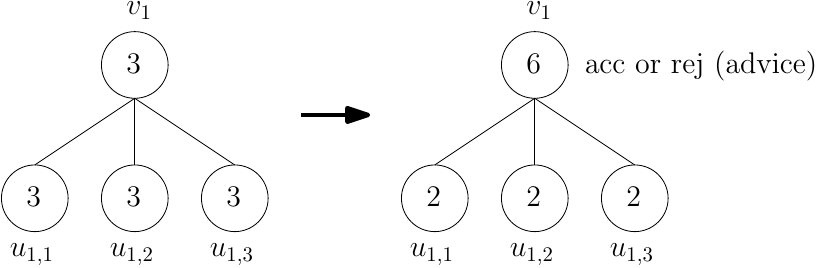}
\end{center}

\textbf{J2.} Let $j \ge 2$. We consider several cases depending on the
type of $v_j$ and its (multi-hop) neighborhood.

\textit{Case 1.} Suppose that $v_j$ receives advice to be
accepted. Since $v_j$ is not the first vertex in the component, it
shares a neighbor with a previous aa-vertex~$v_{j'}$ in the component
for some $j' <j$. Let that neighbor be~$u_{j,1}$. As in the proof of
Lemma~\ref{lem:vc_lem1}, the current degree of $u_{j,1}$ is $2$ prior
to processing $v_j$, so its weight is also~$2$. After processing
$v_j$, we reallocate $1$ unit of weight from each $u_{j,1}, u_{j,2},$
and $u_{j,3}$ to $v_j$ and the weight allocated to $v_j$
becomes~$6$. The current degree of $u_{j,1}$ drops to~$1$. Let $z$
denote the unique neighbor of $u_{j,1}$ at that moment. Then, by the
priority tie breaking in P3, $u_{j,1}$ is rejected and $z$ is
accepted. We reallocate one additional unit of weight from $u_{j,1}$
to~$v_j$. Since $z$ was present in the graph prior to $v_j$ being
processed, the current degree of $z$ at time $t_j$ must be at
least~$2$. After $z$ is processed, we reallocate its weight
to~$v_j$. At this point, the weight allocated to $v_j$ becomes at
least~$9$. Let $y$ be any neighbor of $z$ other than $u_{j,1}$ prior
to $z$ being removed. Since processing $z$ decreases the degree of $y$
and we do not care which component $y$ belongs to, we reallocate one
unit of weight from $y$ to $v_j$ resulting in total weight allocated
to $v_j$ being~$10$. Observe that the triangle-free condition ensures
that $z$ is not $u_{j,2}, u_{j,3}$ and it does not matter for the
argument whether $y$ is $u_{j,2}$ or $u_{j,3}$ or any other vertex in
the graph. This case is illustrated below.

\begin{center}
\includegraphics[scale=0.6]{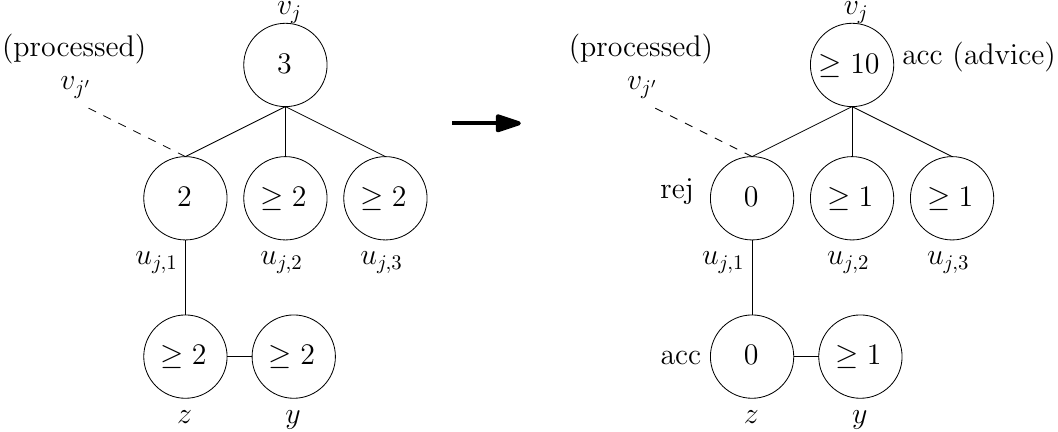}
\end{center}

\textit{Case 2.} Suppose that $v_j$ receives advice to be
rejected. All neighbors of $v_j$ will be accepted after that and we
can reallocate the weight from those neighbors to~$v_j$. The current
degree of each neighbor of $v_j$ is at least $2$ prior to $v_j$ being
processed (see arguments in Lemma~\ref{lem:vc_lem1} for why). Thus, if one of the neighbors has current weight $3$, then the
total weight reallocated to $v_j$ from its neighbors is at
least~$7$. This, together with $v_j$'s initial weight of~$3$,
results in $w(v_j) \ge 10$, as desired.

It only remains to handle the case when neighbors of $v_j$ have
current degree and weight $2$ at the time $v_j$ is processed. Let $z$
be the unique neighbor of~$u_{j,1}$. The current degree of $z$ is at
least $2$ prior to $v_j$ being processed, so its weight is at least
$2$, as well. Processing $v_j$ and its neighbors decreases the degree
of $z$ by at least~$1$ and therefore we may to reallocate one unit of
weight from $z$ to~$v_j$. This last case is illustrated below.

\begin{center}
\includegraphics[scale=0.6]{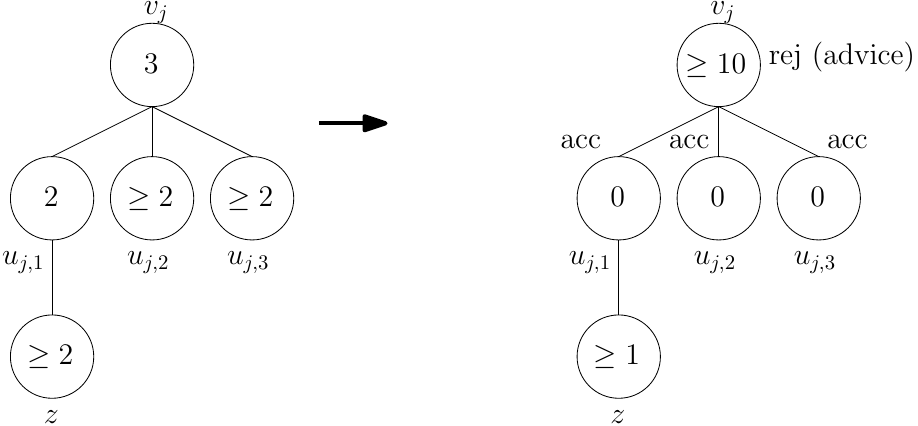}
\end{center}
\end{proof}

We are now ready to prove the bound on the number of advice bits used
by \VC.

\begin{lemma}
\label{lem:vc-adv-len}
\VC uses at most $(7/22)n = 0.31\overline{81} n$ bits of
advice on triangle-free graphs of maximum degree~$3$.
\end{lemma}
\begin{proof}
  Lemma~\ref{lem:vc_lem1} says that $1 + 3 a_i \le s_i$ for
  $i \in [c]$. Since the components are vertex-disjoint, the total
  number of vertices that received advice is $a = \sum_{i=1}^c a_i$
  and the total number of vertices is $n = \sum_{i=1}^c s_i$. Adding
  these inequalities over all $i \in [c]$, we obtain
\begin{equation}
    \label{eq:vc-first}
    3a+c \le n
\end{equation}
Lemma~\ref{lem:vc_lem2} says that
\begin{equation}
    \label{eq:vc-second}
    10a-4c \le 3n
\end{equation}
Adding $4$ times Eq.~\eqref{eq:vc-first} to Eq.~\eqref{eq:vc-second} results in $22 a \le 7 n$, i.e., $a \le (7/22) n$, as desired.
\end{proof}

\begin{corollary}
  The priority exact algorithm corresponding to \VC runs in time
  \[ \lOh{2^{\frac{7 n}{22}}}\subset\Oh{1.247^n}.\]
\end{corollary}

\section{Hardness Results Using Templates}
\label{sec:template}

In this section, we present templates for proving lower bounds on how
much advice is needed for an adaptive priority algorithm to achieve a
certain approximation ratio or optimality. The results hold in the
oblivious priority function model (and the optimality results also
hold in the decision-based priority function model).

The rest of this section is organized as follows: In
Subsection~\ref{ssec:gadget_pairs}, we introduce the notion of gadget
pattern pairs and describe conditions on problems and gadget pattern
pairs that are sufficient for proving lower bounds using the templates
in the next two subsections.
In Subsection~\ref{ssec:lb_approx}, we present
templates for proving trade-offs between the number of advice bits and
\emph{approximation} ratios. We finish the section with a table
listing the lower bound results that can obtained for Minimum Vertex Cover
with the gadget pattern pairs from Subsection~\ref{sec:firstexample} and with
known gadget pattern pairs for five other problems.
In Subsection~\ref{sec:lbopt}, we present the template for proving lower
bounds on the number of advice bits needed to solve problems to
\emph{optimality}. The implications of these results for priority exact
algorithms are also discussed.

\subsection{Gadget Pattern Pairs for the Templates}
\label{ssec:gadget_pairs}

In this section, we generalize the construction introduced in
Section~\ref{sec:firstexample}. These types of constructions will be
used in our lower bound proofs, some based on reductions and some
adversarial.  Thus, in some proofs, vertices are given to the priority
algorithm with advice by
an adversary and, in other proofs, by a reduction (algorithm).  In this
section, we just use the term ``adversary'' to represent both of
these options.

In Section~\ref{sec:firstexample}, we presented a lower bound on
solving the Minimum Vertex Cover problem to optimality using priority
algorithms with advice in the decision-based priority function model.
Two graphs, Graph~1 and Graph~2 were used. When a vertex of degree~$2$
was selected, the adversary chose between two isomorphic copies of
Graph~1 to include; these two isomorphic copies constitute an example
of the general concept, a \emph{gadget pattern pair}. Similarly, for
a vertex of degree~$3$, the
isomorphic copy of Graph~1, along with the isomorphic copy of Graph~2,
was another example of a gadget pattern pair. These two gadget pattern pairs
constitute our \emph{collection of gadget pattern pairs}
for the Minimum Vertex Cover problem.

A gadget $G$ for problem $B$ is simply some constant-sized instance for
$B$, i.e., a collection of input items that satisfy the consistency
conditions for problem~$B$.  For example, if $B$ is a graph problem in
the vertex arrival, vertex adjacency model, $G$ could be a
constant-sized graph.  In this case, an input item would possibly be a vertex
name and a list of neighboring vertex names.

We will define a universe of input items from a union of subuniverses.
For this graph problem, in a subuniverse for a collection of gadget
pattern pairs, each vertex name exists many times as the vertex of an
input item in the universe, because it can be paired with many
different possible lists of neighboring vertex names for the purpose
of making all possible isomorphic instances of the gadget.  The effect
of this is that when an algorithm receives the first input item of
some degree~$d$, it can be any of the degree~$d$ vertices in any
of the gadget patterns in the collection.
Consistency conditions must apply to the actual given input. For
instance, for each vertex name $u$ which is listed as a neighbor of
$v$, it must be the case that $v$ is listed as a neighbor
of~$u$. There could of course be further constraints on the input
instances; for instance, restricting inputs to graphs of some maximum
degree.

In our proofs, the adversary provides multiple gadgets (possibly many
isomorphic ones), each coming from some gadget pattern pair in the
collection.  We need that the sets of possible input items for these
multiple gadgets are disjoint, but contain all necessary input
items for all gadget patterns in the collection of gadget pattern pairs.
To obtain this, we repeat the
construction above, creating distinct subuniverses for each gadget the
adversary presents.  Thus, if, during the execution of an algorithm,
the adversary presents $m$ gadgets to the algorithm, the universe
 consists of $m$ disjoint subuniverses,
$\mathcal{U}_1, \mathcal{U}_2, \ldots, \mathcal{U}_m$; all of these
subuniverses are identical up to renaming of vertices.
This implies that an input item identifies which subuniverse it is in.
We refer to this property as the \emph{disjoint copies condition}.

We also make an assumption on the objective function related to the
gadgets: We say that the objective function for a problem $B$ is
\emph{additive with respect to the gadgets}, if, for any instance
formed from a set of $m$ gadgets from disjoint universes, the
objective function value on the instance is the sum of the objective
function values on the individual gadgets. This implies that
optimality on the instance requires optimality on each gadget. For
example, this assumption will hold for many classical graph problems
since the gadgets will be maximal connected components and the
corresponding objectives are additive with respect to connected
components.

Recall that $\max_P R$ denotes the first item in a set $R$ according
to the current priority function $P$, i.e., the highest priority item
(possibly after tie-breaking by an adversary).
Assume that $\ALG$ responds ``accept'' or ``reject'' to any possible
input item.  This captures problems such as Minimum Vertex Cover, Independent
Set, Clique, etc.

Each collection of gadget pattern pairs also satisfies the first item
condition, and the distinguishing decision condition. The \emph{first
  item condition} says that the first input item chosen by $\ALG$ from
the subuniverse $\mathcal{U}_j$, $\first{\mathcal{U}_j}$, identifies a
gadget pattern pair, $(G_j^a,G_j^r)$, from the collection of gadget
pattern pairs, and that the input item itself gives no information
about which of the two gadgets $G_j^a$ or $G_j^r$ it is in.
For the Vertex Cover example from Section~\ref{sec:firstexample},
the first item could
be a vertex of degree~$2$ or degree~$3$, and the two cases lead to
different gadget pattern pairs, but the actual input item gives no information
as to which of the gadget patterns within the pair it belongs to.
Given a priority function $P$, the first item condition can be written
as: $\first{\mathcal{U}_j}=\max_P G_j^a = \max_P G_j^r$.
The \emph{distinguishing decision condition} says that the decision
with regards to item $\first{\mathcal{U}_j}$ that results in the
optimal value of the objective function in $G_j^a$ is different from
the decision that results in the optimal value of the objective
function in $G_j^r$.  This first input item is said to be the
\emph{distinguishing item}.
For
accept/reject, we list $G_j^a$, where the correct decision is to
accept, as the first gadget pattern of the pair, and $G_j^r$ as the second.

\subsection{Lower Bounds on the Advice Needed for Approximation}
\label{ssec:lb_approx}
In this section, we establish two theorems that give general templates
for gadget-based reductions from a problem referred to as 2-SGKH, one
for maximization problems and one for minimization problems.  While it
takes some work to establish these results, the theorems are easy to
apply to concrete problems once established.  One simply has to
define a collection of gadget pattern pairs with the required properties
and then plug numbers
into our formulas.  We do this for a number of approximation problems
at the end of this section.

The following online problem, while seeming artificial, has been used
extensively in proving lower bounds for online algorithms with advice,
and we can also use it for adaptive priority algorithms with advice.

\begin{definition}
  The \emph{Binary String Guessing Problem~\cite{BHKKSS14} with known
    history} (2-SGKH) is the following online problem. The input
  consists of $(n, \sigma=(x_1, \ldots, x_n))$, where
  $x_i \in \{0,1\}$. Upon seeing $x_1, \ldots, x_{i-1}$, an algorithm
  guesses the value of $x_i$. The actual value of $x_i$ is revealed
  after the guess. The goal is to maximize the number of correct
  guesses.
\end{definition}

B\"{o}ckenhauer et al.~\cite{BHKKSS14} provide a trade-off between the
number of advice bits and the approximation ratio for the binary
string guessing problem. This can be used to show that a linear number
of bits of advice are necessary for many online problems.

\begin{theorem}[B\"{o}ckenhauer et al.~\cite{BHKKSS14}]
\label{thm:2sgkh-lb}
For the $2$-SGKH problem and any $\epsilon\in\EPSINT$, no online
algorithm using fewer than $(1-H(\epsilon))n$ advice bits can make
fewer than $\epsilon n$ mistakes for large enough~$n$, where
$H(p) = H(1-p) = - p \log (p) - (1-p) \log (1-p)$ is the binary
entropy function.
\end{theorem}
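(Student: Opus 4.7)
The plan is to use an information-theoretic counting argument showing that the advice together with the record of which guesses were correct encodes the entire input string. First I would fix any candidate algorithm $\ALG$ that uses $b$ bits of advice, and for each input $\sigma \in \{0,1\}^n$ define $a(\sigma)$ to be the advice string actually used on $\sigma$ and $c(\sigma) \in \{0,1\}^n$ to be the correctness vector, where $c_i = 1$ iff the $i$-th guess of $\ALG$ on $\sigma$ equals $x_i$.

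The crucial lemma to establish is that the map $\sigma \mapsto (a(\sigma), c(\sigma))$ is injective. The argument is by induction on the prefix length: given $a$, the algorithm's first guess $g_1$ is fully determined, and $c_1$ then reveals whether $x_1 = g_1$ or $x_1 = 1 - g_1$. Once $x_1, \ldots, x_{i-1}$ are recovered, the deterministic guess $g_i$ is a function of $(a, x_1, \ldots, x_{i-1})$, and $c_i$ reveals $x_i$. It is essential here that the problem is 2-SGKH (with known history), so that the algorithm's next guess is a function of the true revealed bits rather than of its own prior guesses; this is exactly what makes forward simulation from $(a, c)$ well-defined.

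With injectivity in hand, the rest is counting. If $\ALG$ makes fewer than $\epsilon n$ mistakes on every input, then $c(\sigma)$ has fewer than $\epsilon n$ zeros, so $c(\sigma)$ lies in a Hamming ball of radius $\epsilon n$ around the all-ones string. The size of this ball is at most $\sum_{k=0}^{\lceil \epsilon n \rceil} \binom{n}{k} \leq 2^{H(\epsilon)n}$ by the standard entropy bound, valid for $\epsilon \leq 1/2$ and $n$ sufficiently large. Injectivity then forces $2^b \cdot 2^{H(\epsilon)n} \geq 2^n$, i.e., $b \geq (1 - H(\epsilon))n$, contradicting the hypothesis $b < (1 - H(\epsilon))n$. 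The main obstacle is getting the injectivity step right; once the forward-simulation reconstruction is in place the conclusion follows from a single application of the entropy inequality.
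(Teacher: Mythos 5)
The paper does not actually prove this statement: Theorem~\ref{thm:2sgkh-lb} is imported as a black box from B\"{o}ckenhauer et al.~\cite{BHKKSS14}, so there is no in-paper proof to compare your argument against. Judged on its own, your proposal is correct and is essentially the standard counting argument behind the cited result: with the advice fixed, the algorithm is deterministic, and because 2-SGKH reveals the true bit after each guess, the correctness vector lets you resimulate the run and recover the input, so $\sigma \mapsto (a(\sigma), c(\sigma))$ is injective; if every input suffered fewer than $\epsilon n$ mistakes, the image would have size at most $2^{b} \cdot \sum_{k=0}^{\lfloor \epsilon n \rfloor} \binom{n}{k} \le 2^{b + H(\epsilon) n} < 2^{n}$, a contradiction. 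Two small points to tighten. First, count advice situations by the first $b$ bits of the oracle's tape (the algorithm's behaviour on any input depends only on these, since it never reads more than $b$ bits), rather than by the variable-length prefix actually read: the latter gives up to $2^{b+1}-1$ possibilities, which loses one bit in the final inequality and no longer literally contradicts $b < (1-H(\epsilon))n$. Second, the binomial sum should be cut at $\lfloor \epsilon n \rfloor$: ``fewer than $\epsilon n$ mistakes'' means at most $\lceil \epsilon n \rceil - 1 \le \lfloor \epsilon n \rfloor$ zeros in $c(\sigma)$, and $\sum_{k=0}^{\lfloor \epsilon n \rfloor} \binom{n}{k} \le 2^{H(\epsilon) n}$ is exactly the range in which the entropy bound is valid for $\epsilon \le 1/2$ (your version with the ceiling is a harmless slip, but as written it is not the standard inequality). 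With these adjustments your argument is a complete and self-contained proof of the quoted theorem.
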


To obtain an optimal online algorithm with advice for $2$-SGKH,
$n$~bits of advice are necessary and sufficient~\cite{BHKKSS14}.

Results and proofs presented here are somewhat similar to those
presented in~\cite{BBLP20} for fixed priority algorithms with
advice. However, there are two major differences.  The harder and more
interesting one is that we handle adaptive priorities, where the
priority functions
may depend partially on the advice. 
In addition, we reduce from string guessing directly instead of
going via an intermediate priority algorithm problem. The purpose of
this is to avoid losing constant factors with regards to the
inapproximability results through intermediate reductions, but this
change also made it easier to handle adaptive priorities.

The lower bounds in this section hold in the oblivious priority function
model. Recall that in Section~\ref{sec:firstexample}, we showed a lower
bound result for solving Vertex Cover to optimality in the decision-based
priority function model. It is an open problem to determine if the
approximation lower bounds we prove here also hold in the decision-based
priority function model. The problem in proving this when dealing
with approximation algorithms is that, theoretically,
a priority algorithm with advice could use that advice to encode
information in the decisions it makes and then use those decisions
in later priority functions. This would allow the priority functions
to depend on the advice. For algorithms solving a problem to optimality,
this encoding cannot be done since 
the gadgets in the proof ensure that the each decision made by an
optimal algorithm is forced.

The template is restricted to binary decision problems since the goal
is to derive inapproximability results based on the 2-SGKH problem,
where guesses (answers) are either $0$ or $1$.  In our reduction from
2-SGKH to a problem $B$, we assume that we have a priority algorithm
$\ALG$ with advice in the oblivious priority function model for problem $B$.
Thus, the
priority functions may vary between inputs to $\ALG$, but are oblivious
when the input item selected has no apparent relation to any input seen
before that point. The
current priority function will generally be referred to as $P$.
For the reduction, the inputs to 2-SGKH are
$X=\langle x_1, \ldots, x_n\rangle$.

\paragraph{Reduction algorithm}  Based
on $\ALG$, its advice, and its priority functions, we define an online
algorithm $\ALG'$ with advice (the reduction algorithm) for 2-SGKH.
The reduction is \emph{advice-preserving}, since $\ALG'$ only uses the
advice that $\ALG$ does, no more.  The input items,
$n,x_1,x_2,\ldots,x_n$ with $x_i \in \{ 0,1\}$, to 2-SGKH arrive in an
online manner, so after $n$ arrives, $\ALG'$ must guess $x_1$, and
then the actual value of $x_1$ is revealed. 
In the general case,
immediately after the value $x_i$ is revealed, $\ALG'$ must guess
$x_{i+1}$ and then the actual value $x_{i+1}$ is revealed.  When $x_n$
is revealed, $\ALG'$ knows that this is the end of the input. At the
end, there is some post-processing to allow $\ALG'$ to complete its
computation.
$\ALG'$ is outlined in Algorithm~\ref{alg:template}, but we
now describe how $\ALG'$ provides input to \ALG in a consistent manner.
\begin{algorithm}
\caption{The reduction algorithm.}\label{alg:template}
\textbf{Given:} $\ALG$ for problem~$B$; the inputs to 2-SGKH are
$X=\langle x_1, \ldots, x_n\rangle$ \\

\begin{algorithmic}[1]
  \State{$R=\mathcal{U}$}
  \Comment{Use the input to $B$ to give answers for $X$}
\State $i=0$  \Comment{Current index of 2-SGKH input}
\While {$i < n$}  
  \State{Let $P$ be the current priority function for $\ALG$} 
  \State $v = \max_P R$ \Comment{Choose $v$ as described in~\ref{consistent-choice}}
  \If {$v$ is the first vertex from universe $\mathcal{U}_{i+1}$}
  \State{$i=i+1$}
        \State{present $v$ to $\ALG$} \label{line-present-mone}
        \State{answer $0$ if $\ALG$ answers ``accept'' and $1$ if $\ALG$ answers ``reject''} \label{line-answer-same}
        \State{receive actual $x_i$}
        \State{update $R$ to only contain vertices from one of the two gadgets}
        \State{make $H_i$ gadget $G_i^a$ if $x_i=0$ and $G_i^R$ if $x_i=1$}
        \State{$R=R\setminus (\mathcal{U}_i\setminus H_i)$}
        \Else
        \State{present $v$ to $\ALG$}
        \EndIf  
        \State $R=R \setminus \SET{v}$
\EndWhile
\While {$R\not=\emptyset$}      \Comment{Post-processing to finish inputs for problem $B$}
  \State Let $P$ be the current priority function for $\ALG$
  \State $v= \max_P R$ 
  \State present $v$ to $\ALG$
  \State $R=R \setminus \SET{v}$
\EndWhile
\end{algorithmic}
\end{algorithm}

\paragraph{Consistent choice of input items}
\label{consistent-choice}
\begin{itemize}
  \item
    $\ALG'$ defines the universe $\mathcal{U}$ to be the union of $n$ disjoint gadget
    pair universes,
    $\{ \mathcal{U}_1, \mathcal{U}_2,\ldots,\mathcal{U}_n\}$.
  It eventually defines an input to problem $B$, $H_1,H_2,\ldots,H_k$,
  where $H_i$ is a gadget $G_i^a$ from $\mathcal{U}_i$ if $x_i=0$;
  otherwise it is a gadget $G_i^r$ from $\mathcal{U}_i$.

  These $H_i$ can be defined initially, if the input items are
  isomorphic, in which case a set $R$ is initialized to contain the
  input items from these gadgets.  Otherwise, as in the case of the
  Vertex Cover gadget patterns from Section~\ref{sec:firstexample}, the
  algorithm's priority functions can give subsets of input items with
  identical priorities due to its oblivious nature.
  Knowing the inputs to 2-SGKH and using the fact that the first item
  condition holds,
  $\ALG'$ can always determine
  which gadget to actually use for $H_i$ when the first input item from
  $\mathcal{U}_i$ is selected. The set $R$ initially
  contains all of the universe $\mathcal{U}$, and $\ALG'$ removes input
  items from $R$ that are in $\mathcal{U}_i$, but are not in $H_i$,
  when $H_i$ has been determined.
  Other input items are removed as they are processed.

\item
  $\ALG'$ decides which input item to give when the
  algorithm's priority function designates a set $S$ of size greater
  than one as those input items having highest priority. If at least
  one input item from every universe has been processed, the reduction
  algorithm
  can make an arbitrary choice, lexicographically, for example. The
  same holds if the input items contain names of one or more input
  items that have already appeared in earlier input items (for a graph
  in the vertex arrival, vertex adjacency model, this means that the
  input item is a neighbor or a neighbor of a neighbor of some vertex
  already processed.) Otherwise, $\ALG'$ has arranged that \ALG
  has seen or will see input items from the first $i-1$ universes and
  now presents the first from $\mathcal{U}_i$. From the set of input
  items with current highest priority, $\ALG'$ chooses which
  gadget pattern is correct for $H_i$: $G_i^a$ if $x_i=0$ or $G_i^r$
  if $x_i=1$, satisfying that the distinguishing item, $v$, for the
  gadget is among those in~$S$. $\ALG'$ presents $v$ to the
  algorithm and chooses the actual gadget $H_i$ consistent with that.
\end{itemize}

The main challenge is to ensure that the
input items to $\ALG$ are presented in the order determined by the
priority functions, which may change over time.  The fact that the
priority function does not distinguish between input items that
have no known connection to input items already seen allows $\ALG'$
to choose a distinguishing item in a new gadget from a new universe
when that is necessary. In this case, by the disjointness of the
universes for the
gadgets and the obliviousness of the priority functions, such a
distinguishing item will always be in the set of items of highest
priority.
Thus, the first items in the successive gadgets are chosen
in order. The first item chosen from a gadget is one where the
distinguishing decision condition holds, i.e., one where one decision is
optimal for that gadget and the other leads to a non-optimal solution.

We let $\ALG(I)$ denote the value of the objective function for $\ALG$
on input $I$. The size of a gadget pattern~$G$, denoted by $\SIZE{G}$, is the
number of input items specifying a gadget consistent with that gadget
pattern. We write $\OPT(G)$ to
denote the best value of the objective function on $G$. Recall that we
focus on problems where a solution is specified by making an
accept/reject decision for each input item. We slightly abuse notation
and let $\first{G}$ denote
the input item from gadget~$G$ that was presented to $\ALG$ first, due
to $\ALG'$'s choice among the set of input items with highest priority.
We write $\BAD(G)$ to denote the best
value of the objective function attainable on $G$ after making the
wrong decision for that first item, $\first{G}$, i.e., if there is an
optimal solution that accepts (rejects) $\first{G}$, then $\BAD(G)$
denotes the best value of the objective function \emph{given} that
$\first{G}$ was rejected (accepted).

\begin{theorem}
\label{thm:template-minimization}
Consider a collection of $k\geq 1$ gadget pattern pairs
$\SETOF{(G_j^a, G_j^r)}{1\leq j\leq k}$ for a minimization problem~$B$.
Suppose that the objective function for $B$
is additive with respect to the gadgets.
Let $\mathcal{U}_1, \ldots, \mathcal{U}_n$ each be subuniverses
for the collection of gadget pattern pairs,
and the universe $\mathcal{U}$ be the union of the subuniverses.
Assume that the following
conditions are satisfied for the gadget pattern pairs
with respect to the subuniverses: the consistency
condition for $B$, the first item condition, the distinguishing
decision condition, and the disjoint copies condition.
Let $s$ be
the maximum number of input items in any gadget pattern in the collection.
Suppose the values
$$\OPT(G_j^a), \BAD(G_j^a), \OPT(G_j^r), \BAD(G_j^r)$$ are independent
of $j$, and we denote them by
$$\OPT(G^a), \BAD(G^a), \OPT(G^r), \BAD(G^r);$$
we assume that $\OPT(G^r) \ge \OPT(G^a)$.  Define
$\rho=\min\SET{\frac{\BAD(G^a)}{\OPT(G^a)},\frac{\BAD(G^r)}{\OPT(G^r)}}$.
Then for any $\epsilon \in\EPSINT$, no adaptive priority
algorithm in the oblivious priority function model
using fewer than $(1-H(\epsilon))n/s$ advice bits can
achieve an approximation ratio smaller than
\[1 + \frac{\epsilon (\rho-1) \OPT(G^a)}{\epsilon \OPT(G^a) + (1-\epsilon)\OPT(G^r)}.\]
\end{theorem}
\begin{proof}
  Consider an adaptive priority algorithm \ALG for~$B$
  in the oblivious priority function model.
  A reduction from 2-SGKH is specified in Algorithm~\ref{alg:template},
    combined with the definition of $\ALG'$.
  The set $R$ contains the remaining items which could still be in
  the input to $B$ and have not yet been presented to \ALG.    At
  any point in time, one of the input items with the highest priority among
those
still available in
$R$ is presented to $\ALG$. This item is the first input item
from a gadget when (1) there are still gadgets in $R$, where none of their
input items have been seen, and (2) the set of input items with highest
priority is the set of input items containing no reference to any input
item referenced in any input item already seen.
If this item is the first
input item from a gadget,  $H_i$, from $\first{\mathcal{U}_i}$, it is an input
item where the distinguishing decision condition holds.
In this case, the next input to 2-SGKH to be
processed is $x_i$, and $\ALG'$ guesses $0$ for $x_i$ if $\ALG$ accepts
$\first{\mathcal{U}_i}$ and $1$ if $\ALG$ rejects.
Note that $\ALG'$ has created $H=\langle H_1,H_2,\ldots,H_n\rangle$
such that the answer $\ALG$ gives is correct for problem $B$ if
and only if the answer $\ALG'$ gives is correct for 2-SGKH.
The correct answer by \ALG is
well defined by the distinguishing decision condition.

The amount of advice is the same for both algorithms, so when it is
$(1-H(\epsilon))n'$ bits for the $n'$ inputs to 2-SGKH, it is at least
$(1-H(\epsilon))n/s$ bits for the $n\leq sn'$ inputs to $B$.

Now we turn to the approximation ratio obtained. We want to
lower-bound the number of incorrect decisions by $\ALG$. We focus on
the input items which are $\first{\mathcal{U}_i}$ and assume that
$x_i$ is the next input to 2-SGKH when $\first{\mathcal{U}_i}$ is
processed.  Assume that $\ALG$ answers correctly on all inputs that
are not $\first{\mathcal{U}_i}$ for any $i$.

We know from Theorem~\ref{thm:2sgkh-lb} that for any
$\epsilon \in \EPSINT$, any online algorithm using fewer than
$(1-H(\epsilon))n$ advice bits makes at least $\epsilon n$ mistakes on
2-SGKH. Since we want to lower-bound the approximation ratio of
$\ALG$, and since a ratio larger than one decreases when increasing
the numerator and denominator by equal quantities, we can assume that
when $\ALG$ answers correctly, it is on the gadget pattern pair with the larger
$\OPT$-value, $G^r$. For the same reason, we can assume that the ``at
least $\epsilon n$'' incorrect answers are in fact exactly
$\epsilon n$, since classifying some of the incorrect answers as
correct just lowers the ratio. For the incorrect answers, assume that
the gadget pattern $G^a$ is presented $w$ times, and, thus, the gadget pattern, $G^r$,
$\epsilon n - w$ times.

Denoting the input created by $\ALG'$ for $\ALG$ by $I$, we obtain the
following, where we use that $\BAD(G_j^x) \geq \rho\OPT(G_j^x)$ for
$x\in \{ a,r\}$.  Since the objective function for problem $B$ is
additive,

\[\begin{array}{rcl}

\dfrac{\ALG(I)}{\OPT(I)} & \geq &

\dfrac{(1-\epsilon)n\OPT(G^r)+w\BAD(G^a)+(\epsilon n - w)\BAD(G^r)}{
  (1-\epsilon)n\OPT(G^r)+w\OPT(G^a)+(\epsilon n - w)\OPT(G^r)}
\\[4ex]
& \geq &

\dfrac{(1-\epsilon)n\OPT(G^r)+w\rho\OPT(G^a)+(\epsilon n - w)\rho\OPT(G^r)}{
  (1-\epsilon)n\OPT(G^r)+w\OPT(G^a)+(\epsilon n - w)\OPT(G^r)}
\\[4ex]
& = &

1 + \dfrac{w(\rho-1)\OPT(G^a) + (\epsilon n -w)(\rho-1)\OPT(G^r)}{
  w\OPT(G^a) + (n-w)\OPT(G^r)}

\end{array}\]

Taking the derivative with respect to $w$ and setting equal to zero
gives no solutions for $w$, so the extreme values must be found at the
endpoints of the range for $w$ which is $[0,\epsilon n]$.

Inserting $w=0$, we get $1+\epsilon(\rho-1)$, while $w=\epsilon n$ gives
\[1 + \frac{\epsilon (\rho-1) \OPT(G^a)}{\epsilon \OPT(G^a) + (1-\epsilon)\OPT(G^r)}.\]

The latter is the smaller ratio and thus the lower bound we can provide.
\end{proof}

The following theorem for maximization problems is proved analogously.

\begin{theorem}
\label{thm:template-maximization}
Consider a collection of $k\geq 1$
gadget pattern pairs $\SETOF{(G_j^a, G_j^r)}{1\leq j\leq k}$ for a maximization problem~$B$,
satisfying the conditions in Theorem~\ref{thm:template-minimization}.
Let $s$ be the maximum number of input items in any gadget pattern in
the collection.  Then for any $\epsilon \in \EPSINT$, no adaptive
priority algorithm using fewer than $(1-H(\epsilon))n/s$ advice bits
can achieve an approximation ratio smaller than
\[1 + \frac{\epsilon (\rho-1) \OPT(G^a)}{\epsilon \OPT(G^a) + (1-\epsilon)\rho\OPT(G^r)}\]
where
$\rho=\min\SET{\frac{\OPT(G^a)}{\BAD(G^a)}, \frac{\OPT(G^r)}{\BAD(G^r)}}$.
\end{theorem}

\begin{proof}
  The proof proceeds as for the minimization case in
  Theorem~\ref{thm:template-minimization} until the calculation of the
  lower bound of $\frac{\ALG(I)}{\OPT(I)}$.  We continue from that
  point, using the inverse ratio to get values larger than one.

We use that for $x\in \{ a,r\}$,  $\BAD(G^x) \leq \OPT(G^x) / \rho$.
\[\begin{array}{rcl}

\dfrac{\OPT(I)}{\ALG(I)} & \geq &

\dfrac{(1-\epsilon)n\OPT(G^r)+w\OPT(G^a)+(\epsilon n - w)\OPT(G^r)}{
      (1-\epsilon)n\OPT(G^r)+w\BAD(G^a)+(\epsilon n - w)\BAD(G^r)}
\\[4ex]
& \geq &

\dfrac{(1-\epsilon)n\OPT(G^r)+w\OPT(G^a)+(\epsilon n - w)\OPT(G^r)}{
      (1-\epsilon)n\OPT(G^r)+\frac{w}{\rho}\OPT(G^a)+\frac{\epsilon n - w}{\rho}\OPT(G^r)}

\end{array}\]

Again, taking the derivative with respect to $w$ gives an always
non-positive result. Thus, the smallest value in the range
$[0,\epsilon n]$ for $w$ is found at $w=\epsilon n$.  Inserting this
value, we continue the calculations from above:

\[\begin{array}{rcl}

\dfrac{\OPT(I)}{\ALG(I)} & \geq &
\dfrac{(1-\epsilon)n\OPT(G^r)+w\OPT(G^a)+(\epsilon n - w)\OPT(G^r)}{
      (1-\epsilon)n\OPT(G^r)+\frac{w}{\rho}\OPT(G^a)+\frac{\epsilon n - w}{\rho}\OPT(G^r)}
\\[4ex]

& = &
\dfrac{(1-\epsilon)n\OPT(G^r)+(\epsilon n)\OPT(G^a)}{
      (1-\epsilon)n\OPT(G^r)+\frac{\epsilon n}{\rho}\OPT(G^a)}
\\[4ex]

& = &
\dfrac{(1-\epsilon)\rho\OPT(G^r)+\epsilon \rho\OPT(G^a)}{
      (1-\epsilon)\rho\OPT(G^r)+\epsilon\OPT(G^a)}
\\[4ex]

& = &
1 + \dfrac{\epsilon (\rho-1) \OPT(G^a)}{
          (1-\epsilon)\rho\OPT(G^r)+\epsilon\OPT(G^a)}

\end{array}\]
The latter is the smaller ratio and thus the lower bound we can
provide.
\end{proof}

We mostly use Theorems~\ref{thm:template-minimization}
and~\ref{thm:template-maximization} in the following specialized form.

\begin{corollary}
\label{corollary-form-used}
With the setup from Theorems~\ref{thm:template-minimization}
and~\ref{thm:template-maximization}, we have the following:

For a minimization problem, if
$\OPT(G^a)=\OPT(G^r)=\BAD(G^a)-1=\BAD(G^r)-1$, then no adaptive
priority algorithm using fewer than $(1-H(\epsilon))n/s$ advice bits
can achieve an approximation ratio smaller than
$1+\frac{\epsilon}{\OPT(G^a)}$.

For a maximization problem, if
$\OPT(G^a)=\OPT(G^r)=\BAD(G^a)+1=\BAD(G^r)+1$, then no adaptive
priority algorithm using fewer than $(1-H(\epsilon))n/s$ advice bits
can achieve an approximation ratio smaller than
$1+\frac{\epsilon}{\OPT(G^a)-\epsilon}$.
\end{corollary}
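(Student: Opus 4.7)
The plan is to derive both statements of the corollary by direct specialization and algebraic simplification of the bounds in Theorems~\ref{thm:template-minimization} and~\ref{thm:template-maximization}. Since the advice-length condition $(1-H(\epsilon))n/s$ is the same in both statements of the corollary as in the corresponding theorems, no additional work is needed for that part; I only have to track how the ratio simplifies under the extra hypotheses on $\OPT$ and $\BAD$.

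For the minimization case, I would set $\OPT(G^a)=\OPT(G^r)=k$ and $\BAD(G^a)=\BAD(G^r)=k+1$, where $k=\OPT(G^a)$. Then the quantity $r$ from Theorem~\ref{thm:template-minimization} becomes $r=\min\{(k+1)/k,(k+1)/k\}=(k+1)/k$, so $r-1=1/k$. Plugging into
\[1 + \frac{\epsilon (r-1) \OPT(G^a)}{\epsilon \OPT(G^a) + (1-\epsilon)\OPT(G^r)},\]
the denominator collapses to $k$ (since $\epsilon k+(1-\epsilon)k=k$) and the numerator becomes $\epsilon\cdot(1/k)\cdot k=\epsilon$, giving the stated bound $1+\epsilon/\OPT(G^a)$.

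For the maximization case, I would set $\OPT(G^a)=\OPT(G^r)=k$ and $\BAD(G^a)=\BAD(G^r)=k-1$, so $r=k/(k-1)$ and $r-1=1/(k-1)$ from Theorem~\ref{thm:template-maximization}. Substituting into
\[1 + \frac{\epsilon (r-1) \OPT(G^a)}{\epsilon \OPT(G^a) + (1-\epsilon)r\OPT(G^r)},\]
I factor $k/(k-1)$ out of the denominator, yielding $(k/(k-1))[\epsilon(k-1)+(1-\epsilon)k]$; after cancellation with the factor $k/(k-1)$ in the numerator, the expression reduces to $1+\epsilon/(\epsilon(k-1)+(1-\epsilon)k)=1+\epsilon/(k-\epsilon)$, i.e.\ $1+\epsilon/(\OPT(G^a)-\epsilon)$.

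There is no real obstacle here beyond careful arithmetic; the main thing to watch is the identification of $r$ (using the fact that in both cases the two ratios defining the minimum are equal, so the $\min$ plays no role) and the cancellation of $k$ (minimization case) or $k/(k-1)$ (maximization case) between numerator and denominator. The hypotheses of Theorems~\ref{thm:template-minimization} and~\ref{thm:template-maximization} on the gadget pairs are inherited verbatim, so nothing further has to be verified.
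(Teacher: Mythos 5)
Your proposal is correct and is exactly the computation the paper intends: the corollary is stated as a direct specialization of Theorems~\ref{thm:template-minimization} and~\ref{thm:template-maximization}, and your substitutions $r=(k+1)/k$ (minimization) and $r=k/(k-1)$ (maximization) with the ensuing cancellations reproduce the stated bounds $1+\epsilon/\OPT(G^a)$ and $1+\epsilon/(\OPT(G^a)-\epsilon)$. Nothing further is needed.
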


\bigskip
For the Minimum Vertex Cover problem, for example, we can apply the
minimization version of Corollary~\ref{corollary-form-used}. The size
of the gadget patterns is $s=7$ vertices in all cases. Since
$\OPT(G^a)=\OPT(G^r) =3$, and, when the optimal decision is not made
on the first vertex processed, the vertex cover size is at least $4$,
we obtain the following:

\begin{corollary}
\label{vclb}
For Minimum Vertex Cover and any $\epsilon\in\EPSINT$, no adaptive
priority algorithm using fewer than $(1-H(\epsilon))n/7$ advice bits
can achieve an approximation ratio smaller than
$1+\frac{\epsilon}{3}$.
\end{corollary}

\bigskip
The gadget pattern pairs used in~\cite{BBLP20} (called gadget patterns
in that paper) to prove lower bounds in the
fixed priority model also work here in the adaptive priority model;
there are no additional restrictions used in the proof here. (These
gadget patterns are included in the appendix for completeness.) The
reductions done here are directly from 2-SGKH, as opposed to going
through the Pair Matching problem, as in~\cite{BBLP20}. As mentioned
earlier, this
makes the proofs simpler in most respects (except for having to take
into account changing priority functions), and it means that one does
not lose a factor $2$ in the amount of advice required.  Thus, the
results from~\cite{BBLP20} can be expressed using
Table~\ref{table:summary} as adaptive priority algorithm with advice
lower bounds.  All of the ratios obtained approach one as the amount
of advice approaches some fraction of $n$. The gadget pattern pairs used
in~\cite{BBLP20} can also be used for lower bounds on the amount of
advice required for optimality. Thus, those gadget pattern pairs satisfy the
conditions of both templates in this paper.

To collect results in one table, we include results for optimality
though they are treated in the next section. 

\begin{table}[ht]
  \caption{Results for concrete problems: For a given problem, and any
    $\epsilon\in\EPSINT$, no adaptive priority algorithm in the
    oblivious priority function model using fewer than the specified
    number of bits of advice can achieve an approximation ratio
    smaller than the ratio listed. The last column is the number of
    advice bits required for optimality.}
\label{table:summary}
\begin{center}
\begin{tabular}{|p{0.44\linewidth}|p{0.15\linewidth}p{0.1\linewidth}|p{0.14\linewidth}|}
\hline
\raisebox{-1.5ex}{\textbf{Problem}} & \textbf{Advice for \newline Approx.} & \textbf{Approx. \newline Ratio} & \textbf{Advice for \newline Optimality} \\
\hline
&&&\\[-2ex]
Maximum Independent Set \cite{BBLP20} & $(1-H(\epsilon))n/8$ &
$1+\frac{\epsilon}{3-\epsilon}$ & $n/8$ \\[1ex]
Maximum Independent Set [Fig.~\ref{Graph3}] & $(1-H(\epsilon))n/7$ &
$1+\frac{\epsilon}{4-\epsilon}$ & $n/7$ \\[1ex]
Maximum Bipartite Matching & $(1-H(\epsilon))n/3$ &
$1+\frac{\epsilon}{3-\epsilon}$ & $n/3$ \\[1ex]
Maximum Cut & $(1-H(\epsilon))n/8$ &
$1+\frac{\epsilon}{15-\epsilon}$ & $n/8$ \\[1ex]
Minimum Vertex Cover & $(1-H(\epsilon))n/7$ &
$1+\frac{\epsilon}{3}$ & $n/7$ \\[1ex]
Maximum $3$-Satisfiability & $(1-H(\epsilon))n/3$ &
$1+\frac{\epsilon}{8-\epsilon}$ & $n/3$ \\[1ex]
Unit Job Scheduling, Prec.\ Constraints & $(1-H(\epsilon))n/9$ &
$1+\frac{\epsilon}{6-\epsilon}$ & $n/9$ \\[1ex]
\hline
\end{tabular}
\end{center}
\end{table}
Note that the gadget patterns for Maximum Independent Set from~\cite{BBLP20}
have a smaller optimal independent set than the gadget patterns for the
equivalent Minimum Vertex Cover, shown in Fig.~\ref{Graph3}. Thus, there is a
trade-off between the lower bound for the approximation ratio one can
prove and the lower bound on the amount of advice needed to prove it.

\subsection{Lower Bounds on the Advice Needed for Optimality}
\label{sec:lbopt}
In this section,
we consider adaptive priority algorithms that solve problems to optimality.

\begin{theorem}
Consider a collection of $k\geq 1$ gadget pattern pairs
$\SETOF{(G_j^a, G_j^r)}{1\leq j\leq k}$ for a problem~$B$,
satisfying the conditions in
Theorem~\ref{thm:template-minimization}.
Let $s$ be the maximum number of input items in any gadget pattern in
the collection.  Then, any optimal adaptive priority algorithm, \ALG,
with advice in the oblivious priority function model must use at least
$\FLOOR{n/s}$ advice bits on worst case instances with $n$ input
items.
\end{theorem}
\begin{proof}
  We use the proof of Theorems~\ref{thm:template-minimization}
  and~\ref{thm:template-maximization}.  Note that the reduction
  algorithm in Fig.~\ref{alg:template} uses the same amount of advice
  for the algorithm for $2$-SGKH as for the algorithm for problem~$B$
  and makes exactly the same number of errors in guessing bits for
  $2$-SGKH as it makes on first input items of gadgets for
  problem~$B$. Thus, if it solves $B$ to optimality, it also solves
  $2$-SGKH to optimality. Since $n'$ bits of advice are required on
  $n'$-bit inputs to $2$-SGKH~\cite{BHKKSS14}, $n'$ bits of advice
  must be required for $n'$ gadgets as input to problem~$B$.  If the
  maximum gadget size is $s$, then at least $\FLOOR{n/s}$ are
  necessary to achieve optimality.
\end{proof}

In the following, we consider completable problems.  A problem $B$ is
\emph{completable} if every consistent set $S'$ of $n'<n$ input items
can be completed to a consistent set $S$ of $n$ input items in such a
way that if $C'\subseteq S'$ is not an optimal solution for $S'$,
there is no subset $C=C'\cup E$ of $S$ with $E$ a subset of additional
$n-n'$ items such that $C$ is an optimal solution for $S$.  In other
words, a problem is completable if there is a way to give the
remaining input items without giving an algorithm the opportunity to
fix an earlier non-optimal decision.  For Minimum Vertex Cover and many
other problems, for example, one can complete the set $S'$ to $S$ by
adding $n-n'$ isolated vertices.

The result in Section~\ref{sec:firstexample} for Vertex Cover in the
decision-based priority function model can be generalized to give the
same result as above.
\begin{theorem}
Consider a collection of $k\geq 1$ gadget pattern pairs
$\SETOF{(G_j^a, G_j^r)}{1\leq j\leq k}$ for a completable problem~$B$,
satisfying the conditions in
Theorem~\ref{thm:template-minimization}.
Let $s$ be the maximum number of input items in any gadget pattern in
the collection.
Then, any optimal adaptive priority algorithm, \ALG, with advice in the
decision-based priority function model
must use at least $\FLOOR{n/s}$ advice bits on worst case instances
with $n$ input items.
\end{theorem}
\begin{proof}
To define a problem where $k=\FLOOR{n/s}$ bits of advice are necessary and
sufficient for optimality in the decision-based priority function model,
we consider an arbitrary algorithm, \ALGp, for problem $B$,
and an adversary, \ADVp.  We create $k$ disjoint universes,
$\mathcal{U}_1,\mathcal{U}_2,\ldots,\mathcal{U}_k$, copies of the universe~$\mathcal{U}$, with different
names for the input items in each copy, and define the universe, $\mathcal{U}'$,
for \ALGp to be the union of these $k$ universes.  The input for \ALGp
is the union of $H_1,H_2,\ldots,H_k$, where $H_i$ is an isomorphic
copy of either $G_i^a$ or $G_i^r$.

We now define $2^k$ distinct sequences of input items for \ALGp, by
describing how one of these $2^k$ sequences of input items is defined:
\ALGp selects input items one at a time, and \ADVp knows from which of
the $k$ universes the input items originate.

Since we are assuming that \ALGp solves the problem to optimality, the
adversary can assume that the current priority function is
determined based on \ALGp making the correct accept/reject decisions
up to this point. Now, \ADVp does the following: Assume that \ALGp has
already received input items originating from $i$ of the universes
from which $\mathcal{U}'$ was defined and the adversary has a current subset
$X\subseteq U'$.  If that is the case, then $X$
contains exactly enough input items to complete one gadget
from each of the universes from which \ALGp has received some input
item (how this is maintained is explained below).  From universes not
included in these $i$ universes, $X$ still contains
all possible namings of vertices from the gadgets.

Now, \ALGp receives its next input item which will be the input item in $X$
of the highest priority in this round, and that input item is the next
in the input sequence we are defining. This item is determined by the
current priority function which only depends on the input items
received so far and its decisions so far.

If that next input item, $v$, is from one of the $i$ universes,
nothing further is done. However, if that next input item originates
from a universe, $\mathcal{U}_j$, not among the $i$, then the following is done.

By the first item condition and the disjoint copies condition, the
input item $v$ identifies for which gadget pattern pair,
$(G_j^a,G_j^r)$, in the collection $v$ is the distinguishing item,
\ADVp chooses 
$G_j^a$ or $G_j^r$, 
and then removes from $X$ all input items originating from
$\mathcal{U}_j$, except enough to make up exactly the gadget that was
chosen (consistent with whichever of $G_j^a$ or $G_j^r$ was chosen),
with the naming consistent with $v$ being the distinguishing item from
that gadget.

Continuing this inductively defines one sequence of the
$2^k$ distinct sequences of input items. The number of input items in
each sequence is at most $sk\leq n$. If it is less than $n$, irrelevant
input items can be added, since $B$ is completable.

If a priority algorithm with advice for problem $B$ in the decision-based
model uses fewer than
$k$ bits of advice for instances with $sk$ input items, the same
advice must be given for at least two of the sequences, $I_1$
and~$I_2$, defined above. \ALGp therefore uses the same priorities and
makes the same decisions on $I_1$ and $I_2$ until some difference is
detected.  Thus, consider the first time in the processing of $I_1$
and~$I_2$, where an input item $v$ that has current highest priority is
the first input item of a gadget from some~$\mathcal{U}_i$, but the gadgets
included in $I_1$ and~$I_2$ from~$\mathcal{U}_i$ are different.

Up until (and including) this point, all input items have been the
same for the two sets. Thus, \ALGp must make the same decision for~$v$
in both $I_1$ and~$I_2$, but, by the distinguishing decision condition,
one of those decisions leads to a solution
which is not optimal, by the additivity of the objective function.
Thus, \ALGp is not optimal, and $k$~bits of advice
are necessary.
\end{proof}

The templates from the theorems in this section are quite similar and general,
applying to binary decision problems
where collections of gadget pattern pairs satisfying the required
conditions can be created. One can check that all of the gadget
pattern pairs presented in~\cite{BBLP20} are appropriate, thus giving
immediate lower bounds for several problems.

Recall that an exact algorithm created in the obvious way (trying all
advice strings of the maximal required length) from adaptive priority
algorithms with advice is called a \emph{priority exact algorithm}.
For any problem satisfying the conditions of the previous theorem, any
priority exact algorithm obtained for the problem examines at least
$2^{n/s}$ possibilities. This can rule out the possibility of
improvements using priority exact algorithms for certain problems that
already have known complexities better than this. When the size of the
gadget patterns is small, this gives larger lower bounds. For example,
for Minimum Vertex Cover the size of the gadget patterns is $s=7$,
since all possible gadgets have seven vertices.  Thus, the lower bound
for Minimum Vertex Cover (on triangle-free graphs with maximum
degree~$3$) is $\Omega(2^{\frac{n}{7}})\subset \Omega(1.142^n)$, which
is larger than the best known exact algorithms for this problem,
showing that those algorithms are not priority exact algorithms
(derived from a priority algorithm with advice in the decision-based
or oblivious priority function models). For Maximum Independent Set,
our previous gadget patterns~\cite{BBLP20} have size $8$, but the
gadget patterns for Minimum Vertex Cover also work for Maximum
Independent Set (the problems are complements of each other), so the
lower bound we obtain for Maximum Independent Set is also
$\Omega(2^{\frac{n}{7}})$.

Unfortunately, these lower bound results only apply to priority exact
algorithms as defined from priority algorithms with advice in either
the decision-based or the oblivious priority function models (obtained
from a priority algorithm with advice by running the algorithm on all
possible advice strings, all of the same length). As mentioned
earlier, there are usually better implementations of these algorithms
as branch-and-reduce algorithms, giving the possibility of better
analyses of their running times.

In particular, these lower bounds were all proven using constant-sized
gadget patterns, each one being a connected component of the entire
graph.  In practice, though, each connected component (gadget) should be
treated independently, each only requiring one bit of advice.
Then, if a lower bound of $f(n)$ is proven on
the number of advice bits needed for a problem of size~$n$, consisting
of $s$ components, instead of
running time $\Omega(2^{f(n)})$, only $O^*(2s)=O^*(1)$ time is
necessary (trying the advice strings ``0'' and ``1'' for each component).
Thus, it seems very limited how broadly these lower bounds
can be interpreted.

Brahe~\cite{B21} has a construction for Maximum Independent
Set and Minimum Vertex Cover 
using a connected graph which also gives
a linear lower bound on the amount of advice required for optimality
in the decision-based priority function model
(those specific connected graphs were explicitly designed to have
triangles, so they are not triangle-free, but they still have maximum
degree~3).  Thus, the technique of
running the algorithm independently on each connected component fails
there, and one obtains an exponential lower bound for exact algorithms
based on the adaptive priority priority algorithms with advice in
the decision-based priority function model.

\section{The Thorny Path Problem}
\label{sec:tp}

In this section, we consider another problem using adaptive priority
algorithms with advice. Using different techniques, we prove lower
bounds for this problem in the unrestricted and decision-based models.
We conjecture that the lower bound in the unrestricted model is not tight.
We prove
matching upper and lower bounds in the decision-based priority
function model.

We call a tree a \emph{thorny path} if it has a root, $s$, with two
children, and at any depth greater than zero and smaller than the
maximum depth of the tree, there are exactly two nodes; one with zero and one
with two children.

We define the \emph{thorny path problem} as follows. Given a
forest~$G$ consisting of a number of trees, each of which is a thorny
path, as well as a start vertex $s$ of one of the thorny paths of $G$,
the goal is to construct a path from $s$ to one of the two leaves of
maximum depth. The universe of input items is
$\mathcal{U} = \mathbb{Z}^3$. An input item $(u, v, w)$ is a vertex
$u$ with a left child $v$ and a right child $w$.  One can think of
$u$, $v$, and $w$ as vertex names or object identifiers.  The
universe of decisions is $\mathcal{D} = \{0, 1, \bot\}$. Given an
input item $(u, v, w)$, the decision $0$ means to include edge $(u,v)$
in the solution, the decision $1$ means to include edge $(u, w)$ in
the solution, and the decision $\bot$ means to not include any of the
two edges in the solution. The thorny path problem is parameterized by
a single parameter $k \in \mathbb{N}$, which is one less than the
maximum depth in the thorny path containing~$s$.  We refer to the
parameterized thorny path problem as the $k$-thorny path problem.
An example of a thorny path is shown in Fig.~\ref{fig:thorny-path}.
\begin{figure}[h]
\centerline{\includegraphics[scale=0.8]{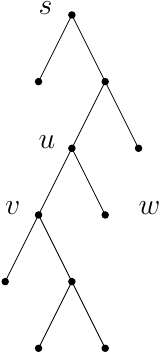}}
  \caption{An example of a $4$-thorny path.}
  \label{fig:thorny-path}
\end{figure}

We begin with a simple observation.

\begin{lemma}
  \label{thornypathupper}
  In the decision-based priority function model, the $k$-thorny path
  problem can be solved by an adaptive priority algorithm with $k$
  bits of advice.
\end{lemma}
\begin{proof}
  The first priority function gives highest priority to an input item
  of the form $(s, \cdot, \cdot)$ and an advice bit is used to select
  the correct child.  Subsequent priority functions give highest
  priority to items with the most recently selected child as the first
  entry and an advice bit is used to choose the next child correctly.
  No advice is necessary at depth~$k$, since including either edge
  gives a valid solution, a leaf at depth~$k+1$.
\end{proof}

Now we turn to lower bounds, starting with the unrestricted priority
function model. We do not give upper bounds. Note, however, that
advice giving the name of a leaf in the thorny path can be used to
follow parents up to the root, without using additional advice. This
advice can be quite large, however, since the universe size is
unbounded.

\begin{theorem}
  In the unrestricted priority function model, the $k$-thorny path
  problem cannot be solved by an adaptive priority algorithm with
  $\log k - 1$ bits of advice.
\end{theorem}
\begin{proof}
Assume that we have $\ell$ adaptive priority algorithms without
advice, $\ALG_1, \ldots, \ALG_\ell$. We fix $m$ large enough (to be
specified later) and let
$x_1, \ldots, x_m \in \mathbb{Z}\setminus\{1\}$ be distinct. Let
$\mathcal{U}$ be the input universe, consisting of all triples with
distinct items formed from $\{s,x_1, \ldots, x_m\}$, with the only
exception being that $s$ only appears as a first element of any
triple.  We construct a thorny path instance $\mathcal{I}$ (that is,
a subset of the input universe that will be used as input) with one
thorny path such that each algorithm $\ALG_1, \ldots, \ALG_\ell$
makes a mistake on $\mathcal{I}$. We construct $\mathcal{I}$
iteratively. In step~$j$, we construct a subinstance $\mathcal{I}_j$
that guarantees that algorithm $\ALG_j$ makes a mistake. The thorny
path of $\mathcal{I}_j$ starts at vertex~$s$ and ends in two
leaves. In addition to $\mathcal{I}_j$, we keep track of a leaf
$v_j$ that is going to be extended in step~$j+1$. We also keep track
of a set of input items $S_j \subseteq S$ that can be used to extend
our instance beyond $\mathcal{I}_j$.  $S_j$ will not contain any
input items where the first entry is currently a non-leaf element of
$\mathcal{I}_j$.  The condition that $\ALG_j$ makes a mistake on
$\mathcal{I}_j$ also continues to hold no matter how $\mathcal{I}_j$
is extended with elements from $S_j$.

For the base case, $\mathcal{I}_0$ is empty, and none of the
algorithms have made a mistake yet. We set $v_0 = s$ and
$S_0 = \mathcal{U}$.

Assume that we have constructed a thorny path $\mathcal{I}_j$ and the
leaf of $\mathcal{I}_j$ to be extended using items from $S_j$ is
$v_j$. Moreover each of $\ALG_1, \ldots, \ALG_j$ makes a mistake on
$\mathcal{I}_j$ and continues to make that mistake no matter how
$\mathcal{I}_j$ is extended by elements from $S_j$. Consider running
$\ALG_{j+1}$ on input $\mathcal{I}_j \cup S_j$ (in spite of it being
an invalid input). In each iteration, the algorithm gives highest
priority to an input item from $\mathcal{I}_j$ or from $S_j$. Consider
the first time $\ALG_{j+1}$ selects an input item from $S_j$.

If $\ALG_{j+1}$ has already made a mistake on an input item from
$\mathcal{I}_j$, then we can simply take
$\mathcal{I}_{j+1} = \mathcal{I}_j$, $v_{j+1} = v_j$, and
$S_{j+1} = S_j$. All the properties are easy to verify in this case.

Otherwise, let $(x,y,z)$ be the first element from $S_j$ that is
requested by $\ALG_{j+1}$. Without loss of generality, assume that the
decision of $\ALG_{j+1}$ is to accept edge $(x,y)$ and not $(x,z)$. If
$x = v_j$, then we extend
$\mathcal{I}_{j+1} = \mathcal{I}_{j} \cup \{(x,y,z)\}$ and $S_{j+1}$
is $S_j$ with all items involving $y$ or $x$ removed, as well as those
items that have $z$ as second or third coordinate. Observe that this
ensures that $\ALG_{j+1}$ makes a mistake on item $(x,y,z)$ and this
fact is unaffected by further extensions of $\mathcal{I}_{j+1}$. In
this case, we have $v_{j+1} = z$.

The last case to consider is when $\ALG_{j+1}$ requests $(x,y,z)$ from
$S_j$ and $x \ne v_j$. In this case, we also consider an item
$(v_j, x, w) \in S_j$ for some $w$ that is different from any other
value appearing in the construction so far. By the way $S_j$ is
constructed, and taking $m$ large enough, such a $w$ is guaranteed to
exist. We extend
$\mathcal{I}_{j+1} = \mathcal{I}_j \cup \{(v_j, x, w), (x, y,
z)\}$. Again, without loss of generality, assume that $\ALG_{j+1}$
accepts $(x,y)$ rather than $(x,z)$. We again set $v_{j+1} = z$ and
$S_{j+1}$ to be the set $S_j$ with all items involving $x$, $y$, $w$,
or $v_j$ removed, as well as those items that have $z$ as the second
or third coordinate. This guarantees that $\ALG_{j+1}$ makes a mistake
on item $(x,y,z)$ and continues to make a mistake on this item no
matter how $\mathcal{I}_{j+1}$ is extended with elements from
$S_{j+1}$.

After all $\ell$ algorithms have made a mistake, leaving a final $v_j$
and $S_j$, an input item from $S_j$ with $v_j$ as the first coordinate
is moved from $S_j$ to $\mathcal{I}_j$, finishing the construction.

Now, observe that each $S_j$ can be defined by some subset
$F \subseteq \{s, x_1, \ldots, x_m\}$. Namely, $S_j$ consists of all
triples formed from $F$, as well as triples formed by having the first
coordinate equal to $v_j$ and the remaining two coordinates coming
from~$F$. In each iteration going from $j$ to $j+1$, at most $4$
elements are removed from~$F$. At the end, three additional elements
from $F$ are used for the last item. Therefore, $m = 3 + 4 \ell$ is
sufficient to guarantee a universe large enough that the construction
terminates only after all algorithms are fooled by the instance.

Finally, assume that $b$ advice bits are used by an adaptive priority
algorithm with advice with the above construction as input.  We
determine a lower bound on~$b$.  Running an algorithm in the
unrestricted priority function with $b$ bits of advice is equivalent
to running $2^b$ algorithms in parallel. Thus, we have $\ell=2^b$
algorithms that can all be fooled simultaneously by a $k$-thorny path
problem, where $k\leq 2\ell$, since the last case above uses two
layers to fool the algorithm in question.  Since $b$ bits are
insufficient and $2^{b+1}=2\ell\geq k$, it follows that $\log k - 1$
bits are insufficient.
\end{proof}

The following theorem shows that the upper bound
in Lemma~\ref{thornypathupper} is tight for the decision-based priority
function model. The proof uses the same ideas as the proof of
the lower bound in Section~\ref{sec:firstexample}. In that proof,
$2^k$ different input sequences were created, and using fewer than
$k$ bits of advice led to at least two of them getting the same advice
and an error being made on one of those two. Those sequences can be
seen as forming a binary tree, with inputs at the nodes in the tree
and the two possible decisions leading to the two subtrees. Thus,
sequences that are the same up until input $m$ share the same path
from the root to that input. This is not quite the case for the
thorny path problem, since it is possible for the adaptive priority
algorithm to select an input item that is not connected to the last
one seen. However, the tree determines $2^k$ root to leaf paths, which
naturally define $2^k$ thorny paths and their inputs.

\begin{theorem}
  An adaptive priority algorithm with advice in the decision-based
  priority function model must use at least $k$ bits of advice to
  solve the $k$-thorny path problem.
\end{theorem}

\begin{proof}
Let \ALG be an adaptive priority algorithm with advice in the
decision-based priority function model. We consider \ALG's
computation on the $k$-thorny path problem.

We construct $2^{k}$ input instances of the $k$-thorny path problem.
To explain the construction, we use a binary tree with $2^{k+1}$
leaves and $s$ as the root. The leaves will be the leaves in the
thorny path problems, and each root to leaf path, along with the
siblings of the vertices on the path, will be the thorny paths that
should be followed by \ALG to get to a leaf.  The $2^{k}$ different
paths one can take from~$s$ to a parent of a leaf will represent the
$2^{k}$ input instances we are constructing.  They will not \emph{be}
the input instances since input instances could have further input
items that are discarded by \ALG.  Each node in the tree has an
associated ordered list of input items, which are all the ones for
which the algorithm chooses $\bot$ (discard) until the next time it
chooses $0$ or~$1$ (left or right).  Thus, a path in the tree defines
an input sequence consisting of the input items forming the path with
the associated ordered lists of input items added. More precisely, the
ordered list of input items associated with a node~$u$ appears in the
input sequence just prior to the input item with $u$ as root (of that
input item; recall that an input item consists of three nodes, two of
which are the children of the first, the root).

The tree represents all execution paths \ALG can take based on
different advice. Along the way, we will also explain how the
adversary will change the input universe as an execution proceeds.  In
an execution (that follows one path), the universe is decreased
gradually as execution progresses down the path, and the universe that
is used at a given point varies depending on which path was chosen by
\ALG (based on its advice).

In constructing the tree, we start with $s$ in the root and we add
nodes to the tree gradually by adding two children to a currently
childless node. Let $u$ be such a node of depth at most~$k$.  We
consider \ALG's execution on the partial input defined by the path
from $s$ to $u$ (including the input items associated with nodes on
the path). The path, together with the associated lists, defines the
decisions \ALG must make on this partial input.

Naturally, \ALG just follows one path in the tree, making decisions to
go left or right or discard based on the advice it gets.  However, if
it is at~$u$, then the next input item $\ALG$'s priority function
selects is only based on the partial input and its decisions.  Now,
for an input item, $(x,y,z)$, either $x=u$ or $x$ is not on the $s$ to
$u$ path (this follows from how we treat the universe; see later).

If $x=u$, we add the leaves $y$ and $z$ to the tree as children
of~$u$.  All input items remaining containing $u$ or its sibling are
removed from the universe.

If $x\not=u$, the adversary removes all input items remaining that
contain $x$, $y$, or $z$ from the universe.  Thus, $x$ can never
become part of any root to leaf path that currently ends at~$u$.  The
input item $(x,y,z)$ is then appended to the ordered list of discarded
input items associated with~$u$.

There are no more input items added after there are $2^{k+1}$ leaves
at depth~$k+1$. If the tree is never completed, there is a path where
\ALG never finds a leaf, so \ALG fails. Otherwise, the $2^{k}$
different input sequences defined by the paths in the tree must have
each their distinct advice string.  Thus, at least $k$ advice bits are
necessary.

\end{proof}

Note that this proof does not appear to work in the unrestricted
priority function model, since it is not clear that the tree can be
defined. For example, if advice (in addition to the decisions made) is
used to determine which input item is chosen next, an input that we
placed off of a thorny path might actually only be chosen if it is on
the path.

\section{Open Problems}
\label{sec:conclusion}

The extension of the adaptive priority model to the advice tape model
leads to many new research directions. We consider the following open
problems to be of particular interest:

\begin{itemize}
\item Design and analyze new adaptive priority algorithms with advice
  for (special cases of) classical optimization problems and convert
  them to offline algorithms, by trying all possibilities for the
  advice as with priority exact algorithms or by implementing them as
  branch-and-reduce algorithms. In particular, are there priority
  algorithms with advice that lead to faster (in terms of the base of
  the exponent) exact exponential time offline algorithms than the
  best known?
\item The previous question also applies to approximation algorithms,
  when the best known offline approximation algorithm is exponential
  in terms of running time.
\item Suggest how to extend the lower bound results to the
  unrestricted priority function model.
  A first example of such a lower bound for an artificial problem
  was given in Section~\ref{sec:tp} for the thorny path problem.
\item Suggest and investigate other extensions of the adaptive
  priority framework besides the information-theoretic advice tape
  extension. For instance, one could consider a class of adaptive
  priority algorithms where advice is given by an $\mbox{AC}^0$
  circuit. What can be said about the power and limitations of such
  algorithms?
\item More generally, study the structural complexity of priority
  algorithms with advice. What reasonable complexity classes can be
  defined based on advice complexity and approximation ratio?
\item The lower bounds implied by our reduction-based framework are of
  the form ``constant inapproximability even given linear advice.''
  Can this framework be extended to handle super-constant
  inapproximability with sublinear advice? More generally, the goal is
  to design some framework that could work in this other realm of
  parameters. A good starting point would be to show that Maximum
  Independent Set cannot be approximated to within $n^{1-\epsilon}$
  with $O(\log n)$ bits of advice, for any fixed $\epsilon \in (0,
  1]$. Note that under the assumption $\textrm{P}\neq\textrm{NP}$,
  this lower bound follows from the famous result of
  H{\aa{}}stad~\cite{Hastad99}. The goal here is to prove this lower
  bound unconditionally for the restricted class of priority
  algorithms with advice.
\end{itemize}

\section*{Acknowledgements}

We would like to thank Nicolai Bille Brahe for pointing out an
ambiguity in an earlier version of this paper.

The first and second authors were supported in part by the Independent Research Fund Denmark, Natural Sciences, grants DFF-7014-00041 and DFF-0135-00018B, and the third author was supported in part by Natural Sciences and Engineering Research Council (NSERC) of Canada.

\bibliographystyle{plain}
\bibliography{refs}

\newpage

\appendix

\section{Gadgets}
The results in Table~\ref{table:summary} are based on gadget pattern
pairs that were presented in~\cite{BBLP20}. For completeness, we
include them in this appendix.

\subsection{Maximum Independent Set and Maximum Cut}
The gadgets are drawn to have vertex $1$ be the one with highest priority.

\begin{figure}[ht]
\centering

\begin{tikzpicture}[scale=0.55, every node/.style={scale=1.0}]

\node[draw=black,circle,fill=white] (4) at (0,0) {4};
\node[draw=black,circle,fill=white] (5) at (2,0) {5};
\node[draw=black,circle,fill=white] (6) at (4,0) {6};
\node[draw=black,circle,fill=white] (7) at (6,0) {7};
\node[draw=black,circle,fill=white] (8) at (8,0) {8};

\node[draw=black,circle,fill=white] (1) at (2,-4) {1};
\node[draw=black,circle,fill=white] (2) at (4,-4) {2};
\node[draw=black,circle,fill=white] (3) at (6,-4) {3};

\draw (6) -- (1);
\draw (6) -- (2);
\draw (6) -- (3);

\draw (8) -- (1);
\draw (8) -- (2);
\draw (8) -- (3);

\draw (4) -- (1);
\draw (4) -- (2);
\draw (4) -- (3);

\draw (5) -- (1);
\draw (5) -- (2);
\draw (5) -- (3);

\draw (7) -- (1);
\draw (7) -- (2);
\draw (7) -- (3);

\draw (8) -- (1);
\draw (8) -- (2);
\draw (8) -- (3);

\draw (6) -- (5);
\draw (5) -- (4);
\draw (6) -- (7);
\draw (7) -- (8);
\path (8) edge[bend right] (4);

\end{tikzpicture} 
\hspace{1cm}
\begin{tikzpicture}[scale=0.55, every node/.style={scale=1.0}]

\node[draw=black,circle,fill=white] (1) at (0,0) {1};
\node[draw=black,circle,fill=white] (4) at (2,0) {4};
\node[draw=black,circle,fill=white] (3) at (4,0) {3};
\node[draw=black,circle,fill=white] (2) at (6,0) {2};
\node[draw=black,circle,fill=white] (5) at (8,0) {5};

\node[draw=black,circle,fill=white] (6) at (2,-4) {6};
\node[draw=black,circle,fill=white] (7) at (4,-4) {7};
\node[draw=black,circle,fill=white] (8) at (6,-4) {8};

\draw (1) -- (6);
\draw (1) -- (7);
\draw (1) -- (8);

\draw (2) -- (6);
\draw (2) -- (7);
\draw (2) -- (8);

\draw (3) -- (6);
\draw (3) -- (7);
\draw (3) -- (8);

\draw (4) -- (6);
\draw (4) -- (7);
\draw (4) -- (8);

\draw (5) -- (6);
\draw (5) -- (7);
\draw (5) -- (8);

\draw (1) -- (4);
\draw (2) -- (3);
\draw (3) -- (4);
\draw (2) -- (5);
\path (5) edge[bend right] (1);

\end{tikzpicture} 

\caption{Topological structure of the gadgets $(G^1,G^2)$ for independent set.}\label{fig:is-gadget}
\end{figure}

\subsubsection{Maximum Independent Set}
The optimal decision is to accept in $G^1$ and reject in $G^2$.
  The maximum number $s$ of input items for a gadget is
$8$, $\OPT(G^1)=\OPT(G^2)=3$, and $\BAD(G^1)=\BAD(G^2)=2$.

\subsubsection{Maximum Cut}
 The goal is to partition the vertices into
two sets such that the number of edges
crossing the two sets is maximized. The partition is specified by the
algorithm assigning $0$ or $1$ to each vertex. In addition, we require
that $0$ is assigned to vertices belonging to the larger block of the
partition. The maximum cut in $G^1$ (or $G^2$) puts the upper vertices
in the larger set and the lower vertices in the other set.
 The optimal decision for the first
vertex is unique: For $G^1$, respond $1$, and for $G^2$, respond $0$.
The maximum number $s$ of input items for a gadget is
$8$, $\OPT(G^1)=\OPT(G^2)=15$, and $\BAD(G^1)=\BAD(G^2)=14$.

\subsection{Maximum Bipartite Matching}
The vertices on the right-hand side are known in advance, and the
vertices on the left arrive online.
The gadgets are drawn to have vertex $1$ be the one with highest priority,
and all possible first vertices look identical.
The optimal
decision is to accept in $G^1$ and reject in $G^2$.

\begin{figure}[ht]
\centering

\begin{tikzpicture}[scale=0.5]

\node[draw=black,circle,fill=white] (l1) at (0,0) {1};
\node[draw=black,circle,fill=white] (l2) at (0,-2) {2};
\node[draw=black,circle,fill=white] (l3) at (0,-4) {3};

\node[draw=black,circle,fill=white] (r1) at (4,0) {1};
\node[draw=black,circle,fill=white] (r2) at (4,-2) {2};
\node[draw=black,circle,fill=white] (r3) at (4,-4) {3};

\draw (l1) -- (r1);
\draw (l1) -- (r2);
\draw (l2) -- (r2);
\draw (l2) -- (r3);
\draw (l3) -- (r2);
\draw (l3) -- (r3);
\end{tikzpicture} 
\hspace{1cm}
\begin{tikzpicture}[scale=0.5]

\node[draw=black,circle,fill=white] (l1) at (0,0) {1};
\node[draw=black,circle,fill=white] (l2) at (0,-2) {2};
\node[draw=black,circle,fill=white] (l3) at (0,-4) {3};

\node[draw=black,circle,fill=white] (r1) at (4,0) {1};
\node[draw=black,circle,fill=white] (r2) at (4,-2) {2};
\node[draw=black,circle,fill=white] (r3) at (4,-4) {3};

\draw (l1) -- (r1);
\draw (l1) -- (r2);
\draw (l2) -- (r1);
\draw (l2) -- (r3);
\draw (l3) -- (r1);
\draw (l3) -- (r3);
\end{tikzpicture} 

\caption{Topological structure of the gadgets $(G^1,G^2)$ for bipartite matching.}\label{fig:bm-gadget}
\end{figure}

The (maximum) number $s$ of input items (the number of
vertices given) for any of the two gadgets is $3$,
$\OPT(G^1)=\OPT(G^2)=3$, and $\BAD(G^1)=\BAD(G^2)=2$.

\subsection{Maximum Satisfiability (MAX-3-SAT)}

   An input item $(x, S^+,S^-)$ consists of a
variable name $x$, a set $S^+$ of \emph{clause information tuples} for
those clauses in which $x$ appears positively, and a set $S^-$ of
clause information tuples for those clauses where the variable $x$
appears negatively. The clause information tuples for a particular
clause contain the name of the clause, the total number of literals in
that clause, and the names of the other variables in the clause, but
no information regarding whether those other variables are negated or
not.   The
goal is to satisfy the maximum number of clauses.

\[ G^1 = C_1 \wedge C_2 \wedge C_3 \wedge C_4 \wedge C_5 \wedge C_6 \wedge C_7 \wedge C_8, \]
where
$$\begin{array}{ll}
C_1= (x_1 \vee x_2 \vee x_3) & C_2= (x_1 \vee \lnot x_2 \vee \lnot x_3) \\[1ex]
C_3= (x_1 \vee \lnot x_2 \vee x_3) & C_4= (x_1 \vee x_2 \vee \lnot x_3) \\[1ex]
C_5= (\lnot x_1 \vee x_2 \vee x_3) & C_6= (\lnot x_1 \vee x_2 \vee x_3) \\[1ex]
C_7= (\lnot x_1 \vee \lnot x_2 \vee \lnot x_3) & C_8= (\lnot x_1 \vee \lnot x_2 \vee \lnot x_3) 
\end{array}$$

\[ G^2 = C_1 \wedge C_2 \wedge C_3 \wedge C_4 \wedge C_5 \wedge C_6 \wedge C_7 \wedge C_8, \]
where
$$\begin{array}{ll}
C_1= (\lnot x_1 \vee x_2 \vee x_3) & C_2= (\lnot x_1 \vee \lnot x_2 \vee \lnot x_3) \\[1ex]
C_3= (\lnot x_1 \vee \lnot x_2 \vee x_3) & C_4= (\lnot x_1 \vee x_2 \vee \lnot x_3) \\[1ex]
C_5= (x_1 \vee x_2 \vee x_3) & C_6= (x_1 \vee x_2 \vee x_3) \\[1ex]
C_7= (x_1 \vee \lnot x_2 \vee \lnot x_3) & C_8= (x_1 \vee \lnot x_2 \vee \lnot x_3) 
\end{array}$$

Suppose without loss of generality that the highest priority input is
\[
(x_1, \{ (C_1,3,\{ x_2,x_3\}),(C_2,3,\{ x_2,x_3\}),(C_3,3,\{ x_2,x_3\}),(C_4,3,\{ x_2,x_3\})\},\]
\[\{ (C_5,3,\{ x_2,x_3\}),(C_6,3,\{ x_2,x_3\}), (C_7,3,\{ x_2,x_3\}),(C_8,3,\{ x_2,x_3\}) \}).
\]
Note that the optimal decision for $x_1$ is unique for each of these
gadgets and is ``True'' for $G^1$ and ``False'' for $G^2$.
The maximum number $s$ of input items for a gadget is
$3$, $\OPT(G^1)=\OPT(G^2)=8$, and $\BAD(G^1)=\BAD(G^2)=7$.

\subsection{Unit Job Scheduling with Precedence Constraints}

In this problem, we have a single machine and the requests
are unit time jobs with precedence constraints, indicating which jobs
must be scheduled before which others.
There could be a cyclic set of constraints. The goal is to
 schedule a maximum number of jobs that are compatible.
 The input items
are of the form $(J, S^+, S^-)$, where $J$ is the name of a job, $S^+$
is the set of jobs such that if they were scheduled together with $J$
they would have to be scheduled before $J$, and $S^-$ is the set of
jobs such that if they were scheduled together with $J$ they would
have to be scheduled after $J$.

The gadget below is a directed graph, specifying the precedence
constraints.

\begin{figure}[ht]
\centering
\tikzstyle{vertex}=[circle, draw, minimum size=5mm] 
\newcommand{\vertex}{\node[vertex]}
\usetikzlibrary{arrows.meta}
\tikzset{%
  tipA/.tip={Triangle[open,angle=75:3pt]},
  tipB/.tip={Triangle[angle=75:5pt]},
}
\begin{tikzpicture}[scale=1.0] 
\vertex (p0) at (0,2) {0};
\vertex (p1) at (2,2) {1};
\vertex (p2) at (4,2) {2};
\vertex (p3) at (6,2) {3};
\vertex (p4) at (0,0) {4};
\vertex (p5) at (2,0) {5};
\vertex (p6) at (4,0) {6};
\vertex (p7) at (6,0) {7};
\vertex (p8) at (1,1) {8};
\draw[-tipB] (p0) -- (p8);
\draw[-tipB] (p0) to[out=45,in=135,distance=7mm] (p2);
\draw[-tipB] (p1) -- (p0);
\draw[-tipB] (p1) -- (p2);
\draw[-tipB] (p2) -- (p3);
\draw[-tipB] (p2) -- (p7);
\draw[-tipB] (p3) to[out=90,in=90,distance=10mm] (p0);
\draw[-tipB] (p3) to[out=135,in=45,distance=7mm] (p1);
\draw[-tipB] (p4) to[out=-45,in=-135,distance=7mm] (p6);
\draw[-tipB] (p4) -- (p8);
\draw[-tipB] (p5) -- (p4);
\draw[-tipB] (p5) -- (p6);
\draw[-tipB] (p6) -- (p3);
\draw[-tipB] (p6) -- (p7);
\draw[-tipB] (p7) to[out=-90,in=-90,distance=10mm] (p4);
\draw[-tipB] (p7) to[out=-135,in=-45,distance=7mm] (p5);
\draw[-tipB] (p8) -- (p1);
\draw[-tipB] (p8) -- (p5);
\end{tikzpicture}
\caption{Topological structure of a gadget for job scheduling of unit
  time jobs with precedence constraints.}\label{fig:js-gadget}
\end{figure}
This gadget  consists only of isomorphic items (each vertex
has in-degree $2$, out-degree $2$, and $4$ different neighbors in
all). Thus, this gadget can represent both $G^1$ and $G^2$ with
renaming.
Every optimal solution contains Job~0 and
excludes Job~8, so $G^1$ has the job labeled 0 in this gadget as the highest priority item and
$G^2$ has the job labeled 8 in this gadget as the highest priority item.
The maximum number $s$ of input items for a gadget is
$9$, $\OPT(G^1)=\OPT(G^2)=6$ (for instance, schedule jobs
$1,0,2,5,4,6$), and $\BAD(G^1)=\BAD(G^2)=5$.

\end{document}